\newtheorem{lemma}{Lemma}
\newtheorem{theorem}{Theorem}%[section]
\newenvironment{proof}%
{\begin{trivlist}\item[\hspace*{\labelsep}{\it Proof.\/}]}%
{\hfill$\Box{}$\end{trivlist}}
\newtheorem{claim}{Claim}
\newtheorem{fact}{Fact}
\newcommand{\eps}{\varepsilon}
\newcommand{\para}{\medskip\noindent}
\newcommand{\head}[1]
 {\markright{\hbox to 0pt{\vtop to 0pt{\hbox{}\vskip 3mm \hrule
 width  \textwidth \vss} \hss}{\sc #1}}}
\begin{document}
\title{\bf On Two Dimensional Orthogonal Knapsack Problem } %\thanks{Supported in part by }}
\author{Xin Han$^1$  \hspace{3mm} Kazuo Iwama$^1$\hspace{3mm} Guochuan
 Zhang$^{2}$
%%$^3$\thanks{Supported in part by the DFG Project AL
%%464/4-1, Eu-Project APPOL II and NSFC (10231060).}
  \\ {\small School of Informatics, Kyoto University, Kyoto
 606-8501, Japan} \\ {\small hanxin, iwama@kuis.kyoto-u.ac.jp}
 \\ {\small $^2$ Department of Mathematics, Zhejiang University, China}
 \\ {\small zgc@zju.edu.cn}
}
%% \date{}
%% % and Department of Mathematics, Zhejiang
%% %University, Hangzhou 310027, China. Research }}
\date{}
\maketitle 

%\lhead{Kyoto University, China, HAN Xin, Online removable square packing}
%%\baselineskip 14pt

\begin{abstract}
 In this paper, we study the following knapsack problem:
 Given a list of squares with profits, 
 we are requested to pack a sublist of them into a rectangular bin 
  (not a  unit square bin) to make  profits in the bin as large as possible. 
  We first observe there is a  Polynomial Time Approximation Scheme (PTAS) 
  for the problem of packing weighted squares into
rectangular bins with large resources,
then apply the PTAS to the problem of packing
squares with profits into a rectangular bin and
get a   $\frac65+\epsilon$ approximation algorithm.
\end{abstract}

\section{Introduction}

%%  Relaxing inputs from one-dimensional objects to two-dimensional
%%  ($d$-dimensional, in general) ones is one of the most standard
%%  and often practically important generalizations for combinatorial
%%  optimization problems. Obviously we encounter several new issues
%%  by this generalization including:
%%  (i) Whether or not restricting an aspect ratio of the inputs,
%%  in particular whether or not only squares are considered.
%%  (ii) In the case of rectangles, whether or not allowing rotation.
%%  (iii) Relative location of two or more objects is not so important
%%  in the one-dimensional case but often crucial in the two dimensional case.

%%  Probably the most well-known example of this generalization is 
%%  for the {\em Bin Packing} problem. After the long history of research,
%%  Correa,kenyon, Bansal and Sviridenko
%%  proved that there exists a PTAS for two-dimensional {\em Bin Packing}
%%  if all inputs items are {\em squares} \cite{ck04},\cite{bs04},
%%  while the problem is MAX-SNP hard in general, i.e.,
%%   if inputs may be rectangles \cite{bs04}.
%%  Thus square inputs is substantially easier than rectangle inputs
%%  for {\em Bin Packing}.

 In this paper we study the two-dimensional generalization for 
 {\em Knapsack}: we are given a set of squares, each of which
 is associated with a profit. The goal is to pack a subset of
 the squares (items) into  a rectangle (bin) to maximize the total profit
  packed. The problem is NP-hard in the strong
 sense even if each  item is an unweighted square
 (i.e., its profit is equal to its area) \cite{Leung1990}.
 A little surprisingly, the research for approximation algorithm
 has started quite recently: Jansen and Zhang \cite{jz04},
  Caprara and Monaci \cite{cm04}, Harren \cite{Harren06} etc.

 {\bf Related Work}
 There are many literatures on rectangle packing and square packing.
 For a two dimensional knapsack problem
 in which a subset of a given set of rectangles
 are  packed  into  a given rectangular bin to maximize the total
 profits in the bin.
  Jansen and Zhang proposed $2+\epsilon$ approximation algorithm \cite{jz04}.
 When all items are squares and their profits are equal to their areas,
 Fishkin, Gerber, Jansen and Solis-Oba \cite{FGJS05} 
 presented a PTAS,
 which was also obtained by Han, Iwama and Zhang independently \cite{HIZ05}.
  Jansen and Zhang \cite{JZSWAT04} proposed a PTAS for 
  packing squares into a rectangular bin to maximize 
  the number of squares packed in the bin \cite{JZSWAT04}.
 Harren \cite{Harren06} 
 proposed $\frac54+\epsilon$ approximation algorithm
 for packing squares into a {\em unit square} bin.
 But his algorithm is not applicable to pack squares into 
 a  {\em rectangular} bin since 
  his algorithm requires that every side of the bin must have the same length.
 Fishkin, Gerber and Jansen \cite{FGJ04} obtained a $(1-\epsilon)$-approximation algorithm 
 for packing a set of rectangles with profits into a large resource bin 
with width 1 and height larger than $(1/\epsilon)^{4}$.

 Another related work is 2 dimensional bin packing problem in which
 all rectangles have to be packed into a unit square bin to minimize
 the number of bins required.
 When all items are squares,
Ferreira et al.~\cite{fmw99} gave an approximation algorithm with
asymptotic worst-case ratio bounded above by 1.988. Kohayakawa et
al. \cite{kmrw01} and Seiden and van Stee \cite{ss03}
independently obtained approximation algorithms with asymptotic
worst-case ratio of at most $14/9+\eps$ (for any $\eps>0$). These
results were recently improved by Bansal, Correa, Kenyon 
 and Sviridenko \cite{BS06}. They  proposed
an asymptotic PTAS for packing $d$-dimensional cubes into the
minimum number of unit cubes. For the online case,
if the number of bins is unbounded, the best known
asymptotic worst case ratio is 2.1439 \cite{Han2006}.

There are also some research on the Multiple Knapsack Problem.
Kellerer \cite{Kell99},
first gave a PTAS for a special case of this problem in which all
the knapsacks have identical capacity \cite{Kell99}. Chekuri and
Khanna \cite{CS05} obtained a PTAS for the general multiple
knapsack problem.
For packing rectangles into multiple identical rectangular bins,
Fishkin et al. \cite{FGJ04} gave a $2+\epsilon$ approximation algorithm.

 {\bf Main results and Techniques:}
We first observe that 
the techniques used in Mutilple Knapsack Problem \cite{CS05}
are useful for the problem of packing weighted squares into
rectangular bins (the bins may have different dimensions) with large resources,
where {\em large resource} means that the height of a bin 
is much larger than the width,
and give a Polynomial Time Approximation Scheme (PTAS)
for the above problem,
then apply the PTAS to the problem of packing
squares with profits into a rectangular bin and
get a   $\frac65+\epsilon$ approximation algorithm.
For packing squaures into a rectangular bin,
we first introduce a simple algorithm by the techniques used in 
  2D bin packing \cite{BS06} problems and show that  
 if there are too many large squares packed or 
  the rest area for packing squares is not small,
 the algorithm has a nice performance,
then we focus on the case in which there are a few large squares packed
 and the rest area for small squares is also few.
 We propose a novel  approach of packing  a few large items such
 that the packing does not affect the future small items
 packing too much,  and call it {\em corner} packing.
 For packing small squares into the rectilinear polygons which is  generated
 after packing large squares into the bin,
 \begin{itemize}
 \item we first dissect the polygons into rectangular blocks such that
 the optimal value of packing small squares into the blocks
 is near the optimal value of packing small squares into the polygons,
 
 \item then call the method used in Multiple Knapsack Problem \cite{CS05},
       to guess one sublist which has a feasible packing and profit at least 
       $(1-\epsilon)OPT_b$, where $OPT_b$ is the optimal value for packing
       small squares into the blocks,

 \item lastly,  we exploit the techniques used in strip
       packing \cite{FGJ04, KR00} to pack items in each block.
 \end{itemize}

 {\bf Worst Case Ratio:}
 We adopt the standard measure {\em worst case ratio} to evaluate
 approximation algorithms. For any input list $L$, 
 let $A(L)$ be the total profit packed by approximation algorithm 
 $A$ and $OPT(L)$ be the  optimal value. 
 The {\em worst case ratio} of algorithm $A$ is thus defined as
 $$ R_{A}=\sup\limits_L
\frac {OPT(L)}{A(L)}.$$

 {\bf \em p($\cdot$), w($\cdot$)}:
 Given a square $q$, we use $p(q)$ and $w(q)$ to denote its profit and
 area respectively.
 And given a list of squares $L=(q_1,\dots,q_n)$, we define
 $p(L) = \sum_{i=1}^n p(q_i)$ and $w(L) = \sum_{i=1}^n w(q_i)$.

\section{Packing Squares into Rectangular Bins with Large Resources}
\label{sec:resources}
\para INSTANCE: Given an input list $S$ of $n$ squares with profits
           and a set of rectangular bins $B=(B_1,B_2,\dots,B_c)$ 
           where $B_i=(w_i,h_i)$ and 
           $ max\{w_i, h_i\} \ge \epsilon^{6^i-1}$ for all $i$,
           $c$, $\epsilon$ are constants. 

\para  OBJECTIVE: Maximize the total profit packed   in $B$.

Based on  the ideas from the seminal papers \cite{CS05,KR00},
we give a PTAS for the above problem.
There are three steps in the PTAS.
We first guess a subset of squares which can be packed into $B$
and whose total profit is near the optimal value through
the technique of rounding the input instance into $O(\epsilon^{-2} \ln n)$ classes.
Then for each bin, we guess  the number of items packed in that
bin from  each class such that
our guess is also near the optimal solution,
i.e.,we do not lose too much profit.
After matching items into bins, 
 we use the strip packing algorithm to pack items in each bin.

\para{\bf Rounding and guessing:}
Here, we consider square packing.
Since there is an natural order relation between any two squares,
the techniques used in Multiple Knapsack problem \cite{CS05}
are useful for square packing too.
We first round  the instance into a well structured instance
which has  $O(\epsilon^{-1} \ln n)$ distinct  profits,  and more items in each profit class  have at most 
       $(\epsilon^{-1})$ distinct sizes (side length).
Then we select a subset items which can be packed into the bins
and has the profit as least $(1-\epsilon)$ time the optimal solution.
But, if the items are rectangles, we do not have the above result, since
there is not an order relation between any two rectangles.
\begin{lemma}
 Given an above instance $I = (B,S)$ with $n$ items,
  in polynomial time $v = n ^{O(1/\epsilon^{3})}$,
 we can obtain instances $I_1,\dots, I_v$ such that
 \begin{itemize}
  \item  $I_j = (B, S_j)$ for $1 \le j \le v$, where $S_j$ is a
   sublist of list $S$.
 \item For $1 \le j \le v$, items in $S_j$ have $O(\epsilon^{-1} \ln n)$ 
       distinct  profits, 
      and more items in each profit class  have at most 
       $(\epsilon^{-1})$ distinct sizes (side length).
  \item There is an index $j$, $1 \le j \le v$,
        such that $S_j$  has a feasible packing in $B$ and 
        $p(S_j) \ge (1-O(\epsilon))OPT(I)$.
 \end{itemize}
 \label{lemma:guessing}
\end{lemma}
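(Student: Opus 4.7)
The plan is to adapt the rounding-and-guessing framework of Chekuri and Khanna for Multiple Knapsack~\cite{CS05} to the square-packing setting, reducing an arbitrary instance $I = (B, S)$ to polynomially many ``structured'' instances whose profits and sizes each take few distinct values, one of which is within $(1-O(\epsilon))$ of $\opt(I)$.

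First, I would perform a \emph{profit rounding} step. Guess $\opt(I)$ up to a factor of $(1+\epsilon)$; since the nontrivial profits lie in a range of at most $\mathrm{poly}(n)$ once items of profit below $\epsilon\,\opt/n$ are discarded (their total contribution is at most $\epsilon\,\opt$), only $O(\epsilon^{-1}\ln n)$ guesses are needed. For the surviving items, round profits down to the nearest power of $(1+\epsilon)$. This yields $O(\epsilon^{-1}\ln n)$ distinct profits at an additional $(1+\epsilon)$ multiplicative loss.

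Second, I would perform \emph{size rounding within each profit class}. Fix a class $C$, sort its members by side length, and partition $C$ into $\epsilon^{-1}$ contiguous groups of nearly equal cardinality; inside each group, replace every item by the smallest item of that group. This leaves at most $\epsilon^{-1}$ distinct sizes per class and no item has grown, so any original packing of the surviving items is still geometrically feasible after the replacement. To control the profit loss, I would use a \emph{shifting} argument: take an optimal packing, sort its used items inside each class by original size, and push each used item one group to the right in that order; the re-assigned items can then be realized with only rounded sides, and only the items originally in the rightmost group of each class are discarded. Since items within a class share nearly equal profit, this costs at most an $\epsilon$-fraction of $p(C \cap \opt)$ per class, totalling $O(\epsilon)\,\opt(I)$.

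Third, I would \emph{enumerate} the structured instances. The freedom consists of the $\opt$ guess and, for each of the $O(\epsilon^{-1} \ln n)$ profit classes, the identities of the items retained at each of the $\epsilon^{-1}$ rounded sizes; combining a careful guess of which items sit in the top profile of each class with a greedy completion for the remainder, a straightforward (if generous) count bounds the number of resulting instances by $v = n^{O(1/\epsilon^{3})}$. By construction, at least one of these $v$ instances realizes the rounded optimum, giving $p(S_j) \ge (1-O(\epsilon))\,\opt(I)$.

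The main obstacle I expect is the two-dimensional shifting argument: unlike one-dimensional Multiple Knapsack, I must verify that replacing each packed square by a strictly smaller square (possibly from a different size slot within the same profit class) preserves non-overlap and bin-containment across \emph{all} classes simultaneously, without cascading infeasibility. The key observation is that the substitution is literally in-place -- a smaller axis-aligned square placed at the same corner as the larger one it replaces -- so any feasible axis-aligned packing of the original items immediately induces a feasible packing of the rounded items, and the per-class shift loses at most $\epsilon|C|$ items of near-identical profit, which is exactly the bound needed.
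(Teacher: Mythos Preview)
Your proposal follows the same Chekuri--Khanna framework as the paper (guess $\opt$, round profits to powers of $1+\epsilon$, reduce sizes per profit class, enumerate), and your resolution of the ``main obstacle'' is correct: replacing a packed square by a smaller one at the same corner preserves feasibility in two dimensions exactly as in one.

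Two points, however, deserve tightening relative to the paper's argument.

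\textbf{Size rounding.} Your exposition is internally inconsistent: you first say you replace each item by the \emph{smallest} in its group (rounding down), which by itself loses no profit and preserves feasibility trivially, so no shifting is needed; you then invoke a shifting argument that ``pushes each used item one group to the right'' and discards the rightmost group, which is precisely the rounding-\emph{up} argument of Fernandez de la Vega--Lueker. The paper uses the latter: it increases every item in group $A_i$ to the smallest size in $A_{i+1}$ and discards one group, and the shift is what certifies feasibility of the enlarged instance. Either direction works here, but you should pick one and drop the other.

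\textbf{The enumeration bound.} This is the real gap. Your claim that ``a straightforward (if generous) count'' yields $v=n^{O(1/\epsilon^3)}$ is not justified, and it is the crux of the lemma. If you enumerate at the granularity of (profit class, size group) pairs, there are $O(\epsilon^{-2}\ln n)$ coordinates each ranging up to $n$, which is not obviously polynomial. The paper's device is to enumerate \emph{only} at the profit-class level: for each of the $h=O(\epsilon^{-1}\ln n)$ classes, guess an integer $k_i\in[h/\epsilon^2]$ encoding (up to additive $\epsilon^2\mathcal{O}/h$) that class's profit contribution, subject to $\sum_i k_i\le h/\epsilon^2$; then within each class greedily take the smallest items up to that profit. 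A stars-and-bars count (the paper's Claim~1) then gives $\binom{h/\epsilon^2+h-1}{h-1}=n^{O(1/\epsilon^3)}$ tuples. Your sketch does not supply this step, and without it the polynomial bound is unproved.
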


\begin{proof}
We show how to construct instance $I_1,\dots, I_v$ from  $I = (B,S)$
such that one of them satisfies the conditions.
There are four steps, which are basically from \cite{CS05}.
\begin{itemize}
\item Guess a value $\mathcal{O}$ such that 
$(1-\epsilon)OPT \le \mathcal{O} \le OPT$.
\item Round down the profits of items into $O(\epsilon^{-1} \ln n)$ classes
      such that 
      $\frac{p_j}{1+\epsilon} \le p_{j}^{-} \le p_j$,
      where $p_j$  is the $j$th item's profit and 
      $p_j^{-}$ is the one after rounding down, 
      where $n$ is the number of items.
\item Guess a set of sublists based the value $\mathcal{O}$ and the rounded
      profits of items such that
      one of them is feasible to pack into the set of bins (blocks) $B$
      and its total profit is at least $(1-O(\epsilon))OPT$.
\item Using the techniques in bin packing \cite{FL81},
      in each distinct profit class,
      reduce the number of distinct sizes into $O(\epsilon^{-1})$ 
      such that we lose the profit at most   $O(\epsilon)OPT$.
      Hence, totally, each of sublists has  $O(\epsilon^{-2} \ln n)$ 
      distinct size values and profits.     
\end{itemize}
 Next, we give the details for the above four steps.
 First, we show how to guess $\mathcal{O}$.
 Given a sufficiently small constant $\epsilon >0$,
 let $p_{max}$ denote the largest value among item profits.
 We know the optimal solution is bounded by $n \cdot p_{max}$.
 So we guess $\mathcal{O}$ from the set 
 \[
\{p_{max}(1+\epsilon)^i| 0 \le i \le 1+ \ln_{1+\epsilon} n\le 2\epsilon^{-1}\ln n\}.\]
 ($1+ \ln_{1+\epsilon} n\le 2\epsilon^{-1}\ln n$ follows from
  $\ln(1+\epsilon) \ge \epsilon -\epsilon^2/2 \ge \epsilon/2$.)
 Therefore, one of the values in the above set is 
 guaranteed to satisfy the desired property for $\mathcal{O}$.

 Given a value $\mathcal{O}$ such that
 $\max\{p_{max},(1-\epsilon)OPT\} \le \mathcal{O} \le OPT$,
 then we show how to massage the given instance into a more structured
 one has few distinct profits.
 \begin{enumerate}
 \item Discard all items with profits at most $\epsilon \mathcal{O}/n$.
 \item Consider the other items and divide all profits by  
       $\epsilon \mathcal{O}/n$ such that after scaling each profit
       is at most $n/\epsilon$.
 \item Round {\em down} the profits of item to the nearest power of
       $(1+\epsilon)$.
 \end{enumerate}
 It is easily seen that only an $O(\epsilon)$ fraction of the 
 optimal profit lost by the above transformation.
 Since $(1+\epsilon)^i \le n/\epsilon$,
 we have 
 \[
  i \le 2\epsilon^{-1} \ln n/\epsilon \le 4 \epsilon^{-1} \ln n.
\]
 The last inequality follows from $n/\epsilon \le n^2$.
 Therefore, we can transform the instance into a new instance with 
 $O(\epsilon^{-1} \ln n)$ distinct profits
 such that only an $O(\epsilon)$ fraction
 of the optimal profit is lost.

 Next we show how to guess the items to pack on the instance 
 with $O(\epsilon^{-1} \ln n)$ distinct profits.
 Let $h \le 4\epsilon^{-1} \ln n +1$ be the number of distinct
 profits in our new instance.
 We partition the input set of squares $\mathcal{S}$ into $h$
 set $S_1,...,S_h$ with items in each set having the same profit.
 Let $U$ be the items chosen in some optimal solution and  let
 $U_i = S_i \cap U$.
 Recall that we have an estimate $\mathcal{O}$ of the optimal value.
 If  $p(U_i) \le \epsilon \mathcal{O}/h$, we ignore the set $S_i$;
 no significant profit is lost.
 Hence we can assume that 
 $ \epsilon \mathcal{O}/h \le p(U_i) \le \mathcal{O} $
 and approximately guess the value $p(U_i)$ for $1 \le i \le h$,
 where $P(U_i)$ is the total profit in $U_i$.
 More precisely, for each $i$ we guess
a value $k_i \in [h/\epsilon^2]$ such that 
 \[
   k_i (\epsilon^2 \mathcal{O}/h) \le p(U_i) \le  
(k_i +1) (\epsilon^2 \mathcal{O}/h),
\]
 where $[h/\epsilon^2]$ stands for the set of integers
 0, 1,...., $\lfloor h/\epsilon^2\rfloor$.

 We show how the numbers $k_i$ enable us to identify the items
 to pack and then show how the values  $k_1,....,k_h$ can be 
 guessed in polynomial time.
 Given the value $k_i$ we order the items in $S_i$ in increasing
 size values (side length).
 Let $a_i$ denote the profit of an item in $S_i$.
 If $a_i \le \epsilon \mathcal{O}/h$,
 pick the largest number of item from this ordered set
 whose cumulative profit does not exceed $k_i(\epsilon^2 \mathcal{O}/h)$.
 If $a_i > \epsilon \mathcal{O}/h$ we pick the smallest number of items,
 again in increasing order of side lengths,
 whose cumulative profits exceeds $k_i(\epsilon^2 \mathcal{O}/h)$.
%%% The asymmetry is for technical reasons.
 The choice of items is thus completely determined by the choice of the $k_i$.
 For a tuple of values $k_1,....,k_h$,
 let $U(k_1,...,k_h)$  denote the set of items packed as described above.
 
 From the above selection,
 there exists a valid tuple $(k_1,...,k_h)$  with 
 each  $k_i \in [h/\epsilon^2]$ such that 
 $U(k_1,...,k_h)$ has a feasible packing in $B$ and 
 $p(U(k_1,...,k_h)) \ge (1-\epsilon)  \mathcal{O}$.

 Now we show that the values $k_1,....,k_h$ can be guessed
 in polynomial time. Before that, we introduce a useful claim.
 \begin{claim} \cite{CS05}
 Let $f$ be the number of $g$-tuples of non-negative integers
 such that the sum of tuple coordinates is equal to $d$.
 Then $f=\binom{d+g-1}{g-1}$.
 If $d+g \le \alpha g$ then $f=O(e^{\alpha g})$.
 \label{claim:counting}
 \end{claim}
 By Claim \ref{claim:counting},
 the number of $h$-tuples $(k_1,...,k_h)$ with 
 $k_i \in [h/\epsilon^2]$ and $\sum_i k_i \le h/\epsilon^2$
 is $O(n^{O(\epsilon^{-3})})$.

 Next we show how to reduce the number of distinct sizes (side length)
in each profit class. The basic idea is the one used in 
 approximation schemes for bin packing \cite{FL81}.
 Let $A$ be a set of $g$ items with identical profit.
 We order items in $A$ in non-decreasing sizes and 
 divide them into $t = (1+1/\epsilon)$ groups $A_1, ..., A_t$ with
  $A_1, ..., A_{t-1}$ containing $\lfloor g/t\rfloor$ items each
  and $A_t$ containing $(g \mod t)$ items.
 We discard the items in $A_{t-1}$ and for $i < t-1$ 
 we increase the size of every item in $A_i$ to the size of 
 the smallest item in $A_{i+1}$.
 Since $A$ is ordered by size,
 no item in $A_i$ is larger than the smallest item in $A_{i+1}$
 for each $1 \le i < t$.
 It is easy to see that if $A$ has a feasible packing
 then the modified instance also has
 a feasible packing.
 We discard at most an $\epsilon$ fraction of the profit and 
 the modified sizes have at most $2/\epsilon$ distinct values.
 Applying this to each profit class
 we obtain an instance with $O(\epsilon^{-2} \ln n)$ distinct size
 values.

 Hence, we have this lemma.
\end{proof}

\para{\bf Distributing the selected items into each bin}
 
After guessing a polynomial number of sublists,
next we consider how to distribute the selected items in each sublist
into bins. Easily to see, 
the possibilities to assign the selected items into bins
is bounded by $c^{n}$,
which is an exponential size of $n$,
where $c$ is the number of bins
and $n$ is the number of items to be packed.
But we can guess a subset from the selected items in a polynomial time
such that the total profit in the subset is near the optimal solution.

After step 1, we have $(\epsilon^{-2} \ln n)$ classes in the input instance.
 Let $k_i$ be the number of items of the $i$th  class
 and let $l_{i}^{j}$ be the number of items of the $i$th  class
 packed in the $j$th bins, where $1 \le j \le m$.
\begin{lemma}
\label{lemma:mapping}
 We can guess a set of numbers $h_i^{j}$ in polynomial time
 such that $(1-\epsilon)l_{i}^{j} \le h_i^{j} \le l_i^{j}$,
 where $1\le \epsilon^{-2} \ln n$ and $1 \le j \le c$ and 
 $c$ is the number of bins.
\end{lemma}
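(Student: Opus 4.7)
The plan is to discretize each unknown count $l_i^j$ to a small set of representative values and then enumerate every tuple; one of the enumerated tuples will automatically satisfy the sandwich bound. First I would fix the candidate set
\[
   V = \{0,1,\ldots,\lceil 1/\epsilon\rceil\} \;\cup\; \bigl\{\flo{(1+\epsilon)^t} : t\in\nz,\ (1+\epsilon)^t \le n\bigr\},
\]
of cardinality $|V| = O(\epsilon^{-1}\ln n)$, and verify that for every integer $l \in [0,n]$ there is a $v\in V$ with $(1-\epsilon) l \le v \le l$. For $l\le 1/\epsilon$ this is immediate (pick $v=l$); for $l > 1/\epsilon$, take the largest $v=\flo{(1+\epsilon)^t}\le l$, so that $v \ge l/(1+\epsilon) - 1 \ge (1-\epsilon) l$ using $l>1/\epsilon$.

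Second, I would enumerate every tuple $(h_i^j)$ whose coordinates lie in $V$, over the index set $\{(i,j) : 1\le i\le \epsilon^{-2}\ln n,\ 1\le j\le c\}$. Since coordinates are chosen independently, the number of tuples is
\[
   |V|^{c\cdot \epsilon^{-2}\ln n} = \bigl(\epsilon^{-1}\ln n\bigr)^{O(c\,\epsilon^{-2}\ln n)},
\]
which is polynomial in $n$ in the PTAS sense used throughout the paper (with $\epsilon$ and $c$ treated as constants). By the discretization lemma above applied coordinate-wise, one enumerated tuple coincides with the rounded image of the unknown optimal distribution $(l_i^j)$, and for that tuple the inequality $(1-\epsilon) l_i^j \le h_i^j \le l_i^j$ holds simultaneously for every $(i,j)$.

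The main obstacle will be that $(l_i^j)$ is not given to us, so each enumerated tuple must be validated a posteriori. My plan is to defer this test to the strip-packing routine invoked by the next step of the PTAS: for every candidate tuple $(h_i^j)$, assemble the implied sublist class by class and attempt to pack $h_i^j$ items of class $i$ into bin $B_j$; discard tuples for which some bin cannot accommodate the demanded count, and among the survivors retain the one of maximum profit. Because the rounded image of the true optimum is enumerated and is realizable by construction, the algorithm indeed outputs a tuple meeting the conclusion of the lemma.
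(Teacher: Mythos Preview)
Your enumeration is not polynomial. With $|V|=\Theta(\epsilon^{-1}\ln n)$ and an index set of size $c\,\epsilon^{-2}\ln n$, the number of tuples is
\[
|V|^{\,c\,\epsilon^{-2}\ln n}
= \exp\!\Bigl(c\,\epsilon^{-2}\ln n\cdot\bigl(\ln(\epsilon^{-1})+\ln\ln n\bigr)\Bigr)
= n^{\Theta(\ln\ln n)}
\]
for fixed $c,\epsilon$, which is quasi-polynomial, not polynomial. The culprit is that your candidate set $V$ depends on $n$: the geometric part $\{\lfloor(1+\epsilon)^t\rfloor:(1+\epsilon)^t\le n\}$ has $\Theta(\epsilon^{-1}\ln n)$ points, and raising an $n$-dependent base to a $\Theta(\ln n)$ power is fatal.

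The paper avoids this by making the per-coordinate candidate set have \emph{constant} size. For class $i$ with $k_i$ items, it uses the geometric grid $\{\lfloor(1+\epsilon)^x\cdot \epsilon k_i/c\rfloor : x=1,2,\ldots\}$, which spans only the interval $[\epsilon k_i/c,\,k_i]$ and therefore contains $O(\log_{1+\epsilon}(c/\epsilon))=O(\epsilon^{-1}\ln(c/\epsilon))$ values --- a quantity depending only on the constants $c,\epsilon$. Any $l_i^j$ below $\epsilon k_i/c$ is rounded to $0$; summing over the $c$ bins this discards at most $\epsilon k_i$ items of class $i$, so the overall $(1-\epsilon)$ profit guarantee survives even though the literal sandwich $(1-\epsilon)l_i^j\le h_i^j$ fails on those small coordinates. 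With a constant-size candidate set per coordinate, the total enumeration is $(\mathrm{const})^{O(\epsilon^{-2}\ln n)}=n^{O(1)}$, which is what the lemma needs.

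A minor secondary issue: your discretization inequality $v\ge l/(1+\epsilon)-1\ge(1-\epsilon)l$ actually requires $l\ge(1+\epsilon)/\epsilon^2$, not merely $l>1/\epsilon$, since $1/(1+\epsilon)-(1-\epsilon)=\epsilon^2/(1+\epsilon)$. This is easily repaired by enlarging the explicit small-value part of $V$, but it is moot given the main gap above.
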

\begin{proof}
 For the $j$th bin,  we guess $h_i^{j}$ items from the $i$th class.
 If $k_i \le \frac{c}{\epsilon(1+\epsilon)}$ then 
 we can guess a number $h_i^{j}$ such that $h_i^j = l_i^{j}$ in 
 $O(\frac{c}{\epsilon(1+\epsilon)})$ time.
 Else, we guess a number $h_i^{j}$ from
 the set $\{ \lfloor (1+\epsilon)^{x} \cdot \frac{\epsilon k_i}{c}\rfloor |
 x= 1,2,\dots\}$
 such that $(1-\epsilon)l_{i}^{j} \le h_i^{j} \le l_i^{j}$.
 Since $h_i^{j} \le k_i$, 
 the number of guesses required to obtain a single $h_i^{j}$ is 
 bounded by $g= \log_{1+\epsilon}c/\epsilon \le O(\epsilon^{-2} \ln c)$,
 for each class, 
 the total number of guesses for all $h_i^{j}$ is bounded 
 by $g^{c} \le O(\epsilon^{-2c}c^{c})$, where $1 \le j \le c$.
 Therefore 
 for all   the $O(\epsilon^{-2}\ln n)$ size classes
 the total number of guesses for is 
 bounded by $n^{\epsilon^{-2}}$,
 which is a polynomial of $n$, where $c$ and $\epsilon$ are  constants.
\end{proof}
 Since all the items  in each size class  have
 the same profit 
 and by Lemma \ref{lemma:mapping} we have $h_i^j \ge (1-\epsilon)l_i^{j}$,
 there exists one  assignment
 which is feasible to $B$ and keeps at least $(1-\epsilon)$ times the profits.
 Next we consider how to packing items into each bin.

\para{\bf Packing each sublist into each bin}:
% For each sublist, we show that 
% there are a polynomial number of possibilities to
% assign all items into $c$ bins
% without losing the profit too much, where $c$ is a constant.
In each bin, we have the following property 
\[
\frac{\max\{w,h\}}{\min\{w,h\}} \ge \epsilon ^{-4},
\]
then  the techniques used in
\cite{KR00, FGJ04} are helpful to pack all squares into the bins.

First, we first give an important lemma for packing
squares into a bin with large resource, called {\em cutting technique}.
%%, used in \cite{JZSWAT04}.
  \begin{lemma}
  Given an input list $L$ of squares with sides at most $\epsilon$
  and two rectangular bins $B_1=( 1, a)$, $B_2 =(1 + 2\epsilon, a)$,
  then  
  \[
  (1- 4\epsilon ) \cdot OPT(L, B_2) \le OPT(L,B_1),
  \]
  where $OPT(L,B)$ is the optimal value for packing list $L$ into  bin $B$.
  \label{lemma:cutting}
 \end{lemma}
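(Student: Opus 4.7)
The plan is a standard vertical ``cut and shift'' argument, exploiting the fact that every square is small relative to the width we want to shave off. Start with an optimal packing $P^*$ of a sublist $L' \subseteq L$ inside $B_2 = (1+2\eps, a)$, so that $p(L') = OPT(L,B_2)$. I will locate a vertical strip of width $2\eps$ inside $B_2$ that is crossed by squares of total profit at most $4\eps \cdot OPT(L,B_2)$, delete those squares, and then slide the surviving packing into $B_1 = (1,a)$.

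To locate the strip, partition the horizontal extent $[0, 1+2\eps]$ of $B_2$ into $k = \lceil (1+2\eps)/(2\eps)\rceil$ consecutive vertical strips of width $2\eps$ (with the last possibly truncated by the right edge). Since every square in $L$ has side length at most $\eps < 2\eps$, each square of $P^*$ intersects at most two consecutive strips. Therefore, summing the profits of squares meeting each strip double-counts each square at most twice, giving
\[
\sum_{i=1}^{k} p\bigl(\{q \in P^* : q \text{ meets strip } i\}\bigr) \;\le\; 2\cdot OPT(L, B_2).
\]
Because $k \ge 1/(2\eps)$, by averaging there exists a strip $S^*$ for which the profit of the squares meeting $S^*$ is at most $2\cdot OPT(L,B_2)/k \le 4\eps \cdot OPT(L,B_2)$.

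Now delete from $P^*$ every square that intersects $S^*$. The remaining squares have total profit at least $(1-4\eps)\cdot OPT(L,B_2)$ and lie entirely in the two (possibly empty) rectangles to the left and to the right of $S^*$, which together have combined width exactly $1$. Shift the right part leftward by $2\eps$ to close the gap; this produces a feasible packing of the surviving squares inside a $1 \times a$ rectangle, i.e.\ inside $B_1$. Hence $OPT(L, B_1) \ge (1-4\eps)\cdot OPT(L,B_2)$, which is the claim.

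There is really no hard step here; the only place to be careful is the double-counting bound, where one must explicitly use that the side length of each square is at most $\eps$ (strictly less than the strip width $2\eps$) to guarantee that a square can meet at most two strips. Everything else — the averaging to find $S^*$, the sliding of the right half — is routine.
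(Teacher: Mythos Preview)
Your argument is essentially the paper's own ``cut and slide'' proof; the only cosmetic difference is that the paper uses slices of width $4\eps$ and removes the squares \emph{completely contained} in the cheap slice (so no double-counting is needed, and the central $2\eps$ of that slice is guaranteed empty), whereas you use slices of width $2\eps$ and remove all squares \emph{meeting} the cheap slice, compensating with a double-counting bound.

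One small technical slip: if the averaging happens to select the truncated last strip (width $<2\eps$), then after deletion the surviving left block has width $(k-1)\cdot 2\eps$, which can exceed $1$, so the slide into $B_1$ fails. The fix is immediate---restrict the averaging to the full-width strips only, of which there are still at least $1/(2\eps)$, so the bound $4\eps\cdot OPT(L,B_2)$ and the rest of your argument go through unchanged.
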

 \begin{proof}
  Now we construct a packing in bin $B_1$ from an optimal packing
  in bin $B_2$ and prove its profit is at least
  $(1- 4\epsilon ) \cdot OPT(L, B_2)$.
  
  Consider an optimal packing in bin $B_2$,
  we cut  $B_2$ into $\lfloor \frac{1}{4\epsilon} \rfloor$ pieces 
  of slices, say $S_1, S_2, \dots, S_{\lfloor \frac{1}{4\epsilon} \rfloor}$
  respectively, such that every slice has an exact width  $ 4 \epsilon$ 
 (except the last one),
  shown as fig.~\ref{fig:strips}.
  (Note that some squares may be cut into two parts,
   one part in $S_i$ and another part in $S_{i+1}$).
 \begin{figure}[htbp]
  \begin{center}
  \includegraphics[scale=0.5]{./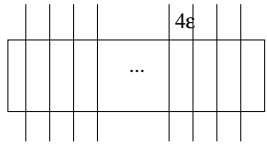}
  \caption{Cutting  bin $B_2$  into  slices}
  \label{fig:strips}
  \end{center}
\end{figure}
  Then   we  find a slice $S_i$ such that $p(S_i) \le 4\epsilon OPT(L,B_2)$
  and remove all squares {\em completely} contained  in slice $S_i$
  if any.
  Observe that after the above removal,
  all  squares remaining in bin $B_2$ can be packed into $B_1$.
  Hence,  $OPT(L,B_1) \ge (1- 4\epsilon ) \cdot OPT(L, B_2)$.
 \end{proof} 

\begin{lemma} \cite{KR00}
 There is an algorithm $A$ which, 
 given a list $L$ of $n$ square and  a positive $\epsilon$,
 produces a packing of $L$ in a strip of width 1 and height $A(L)$ 
 such that $A(L) \le (1+\epsilon)Opt(L) + O(1/\epsilon^2)$.
 \label{lemma:stripPTAS}
\end{lemma}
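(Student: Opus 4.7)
The plan is to prove this as an instance of the Kenyon--R\'emila APTAS for strip packing, specialized to the easier case of squares. Fix the threshold $\epsilon$ and call a square \emph{large} if its side exceeds $\epsilon$ and \emph{small} otherwise. The algorithm first builds a packing of the large squares whose height is close to $Opt(L)$, then places the small squares into the empty space of that packing and, if necessary, stacks the remainder on top.

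For the large squares, I would first apply Fernandez de la Vega--Lueker linear grouping to the sorted side lengths: form $\lceil 1/\epsilon\rceil$ consecutive groups of equal cardinality, round each side up to the largest in its group, and discard the topmost group. Since at most $\lfloor 1/\epsilon\rfloor$ large squares can lie side-by-side in any horizontal slice of the width-$1$ strip, the discarded group can be stacked separately in additional height at most $O(\epsilon)\cdot Opt(L)$. After grouping there are only $O(1/\epsilon)$ distinct sizes, so I can write the configuration LP whose variables are non-negative heights assigned to each shelf configuration (a multiset of large rounded squares that fit side-by-side in width~$1$), with covering constraints that every square is packed and an objective minimizing the sum of heights. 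The LP has a constant number of variables and constraints (depending only on $\epsilon$) and its fractional optimum is at most $Opt(L)$. Rounding a basic feasible solution up to an integer packing adds only $O(1/\epsilon^2)$ additive height, because a basic solution has at most $O(1/\epsilon)$ active configurations, each of height at most $1$.

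For the small squares, I would apply Next Fit Decreasing Height: sort by non-increasing side, set each shelf's height equal to the side of its first square, and open a new shelf whenever the next square no longer fits horizontally; these shelves are then inserted into the free rectangles of the large packing, with any leftover shelves stacked on top. Because every small square has side at most $\epsilon$, a standard wasted-area argument shows that each closed shelf has empty area at most an $O(\epsilon)$ fraction of its area, so the total additional height contributed by the small squares is at most $(1+O(\epsilon))\,w(\text{small})+\epsilon \le (1+O(\epsilon))\,Opt(L)+\epsilon$, using the area lower bound $w(\text{small}) \le Opt(L)$.

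Combining the three contributions and rescaling $\epsilon$ by an absolute constant yields the claimed bound $A(L)\le (1+\epsilon)Opt(L)+O(1/\epsilon^2)$. The main obstacle is the LP-rounding step for the large squares: one must argue that the integer rounding of a basic LP solution together with the linear-grouping discard contribute only the additive $O(1/\epsilon^2)$ and a multiplicative $1+O(\epsilon)$ to the height, and that this error does not compound with the $(1+O(\epsilon))$ loss from the NFDH packing of small squares. As in the original Kenyon--R\'emila analysis, this is handled by exploiting the basic-solution structure of the LP (at most $O(1/\epsilon)$ nonzero configurations, each of height $\le 1$) together with the area-based waste bound for NFDH.
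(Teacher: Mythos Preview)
The paper does not give its own proof of this lemma; it simply quotes the result from Kenyon and R\'emila \cite{KR00} and uses it as a black box. Your sketch is essentially the Kenyon--R\'emila argument (linear grouping of the wide items, configuration LP with a basic solution, NFDH for the narrow items), specialized to squares, and it is correct in outline. Since the paper offers no argument to compare against, there is nothing substantive to contrast; your write-up would in fact supply more detail than the paper does.

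One minor remark on the constants: with only $O(1/\epsilon)$ groups after linear grouping, a basic LP solution has $O(1/\epsilon)$ nonzero configurations, so the rounding loss you describe is actually $O(1/\epsilon)$ rather than $O(1/\epsilon^2)$. This is of course still within the stated $O(1/\epsilon^2)$ bound, and the larger additive term in \cite{KR00} comes from the general rectangle case, where $O(1/\epsilon^2)$ width classes are needed. For squares your tighter accounting is fine.
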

\begin{lemma}
For packing small squares into a constant number of bins, 
for each bin, if $\frac{\max\{w,h\}}{\min\{w,h\}} \ge \epsilon ^{-4}$,
then  there is a polynomial time algorithm
with an output at least  $ (1-O(\epsilon))OPT$,
where $OPT$ is the optimal value.
 \label{lemma:manyresource}
\end{lemma}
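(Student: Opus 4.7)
The plan is to chain Lemmas \ref{lemma:guessing}, \ref{lemma:mapping}, \ref{lemma:stripPTAS}, and \ref{lemma:cutting}: first enumerate a near-optimal sublist of items, then guess how it is distributed among the $c$ bins, then pack each bin by the strip-packing subroutine, and finally absorb the additive slack of strip packing into each bin's long dimension by the cutting technique.

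First I would apply Lemma \ref{lemma:guessing} to $(B,S)$; this produces $n^{O(1/\epsilon^3)}$ candidate sublists, and at least one of them, say $S^{\star}$, admits a feasible packing in $B$ with $p(S^{\star}) \ge (1-O(\epsilon))\opt$ while having only $O(\epsilon^{-2}\ln n)$ distinct (profit, side-length) classes. For each such $S^{\star}$ I would then invoke Lemma \ref{lemma:mapping} to enumerate, in polynomial time, all ``good'' class-to-bin splits $(h_i^{j})$; the correct guess loses at most another factor $(1-\epsilon)$ relative to the optimum distribution of $S^{\star}$, and since the number of classes is $O(\epsilon^{-2}\ln n)$ and $c$ is constant, the total number of guesses stays polynomial.

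The heart of the argument is realising the assignment $L_j$ inside each bin $B_j=(w_j,h_j)$. After rescaling I may assume $w_j=1$ and $h_j\ge \epsilon^{-4}$. Since $L_j$ is a subset of an item set that fits in $B_j$, its minimum strip-packing height in a strip of width $1$ is at most $h_j$. Applying Lemma \ref{lemma:stripPTAS} therefore produces a feasible packing of $L_j$ in a strip of width $1$ and height
\[
A(L_j)\le (1+\epsilon)h_j+O(1/\epsilon^{2})\le (1+2\epsilon)h_j,
\]
where the last inequality uses $1/\epsilon^{2}\le \epsilon^{2}h_j$. Now Lemma \ref{lemma:cutting}, applied with the roles of width and height interchanged---slicing the enlarged bin $(1,(1+2\epsilon)h_j)$ horizontally into slabs of height $4\epsilon h_j$, picking a slab of total profit at most $4\epsilon\,p(L_j)$, and discarding items lying in or straddling that slab---yields a feasible packing inside the true bin $(1,h_j)$ with profit at least $(1-4\epsilon)p(L_j)$.

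Multiplying the factors from the three stages gives total profit at least $(1-4\epsilon)(1-\epsilon)(1-O(\epsilon))\opt = (1-O(\epsilon))\opt$, as required, and the running time is a polynomial times $n^{O(1/\epsilon^{3})}$. The step I expect to be the main obstacle is the bookkeeping around the small-square hypothesis under rescaling: I have to verify that the side-length threshold required by Lemma \ref{lemma:cutting} ($\epsilon$ relative to the short side of the bin) is preserved after the axis interchange, that the additive $O(1/\epsilon^{2})$ overshoot of the strip packing is genuinely dominated by $h_j\ge \epsilon^{-4}$ in every bin that actually receives items, and that the three multiplicative losses indeed collapse into a single $(1-O(\epsilon))$ factor.
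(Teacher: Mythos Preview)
Your proposal is correct and follows essentially the same route as the paper: guess a near-optimal sublist via Lemma~\ref{lemma:guessing}, distribute it to bins via Lemma~\ref{lemma:mapping}, realise each bin's load by the strip-packing APTAS of Lemma~\ref{lemma:stripPTAS}, and then use the cutting technique (Lemma~\ref{lemma:cutting}) to absorb the $(1+\epsilon)h_j+O(1/\epsilon^{2})$ overshoot back into the true bin height, which works precisely because $h_j\ge\epsilon^{-4}$. The only cosmetic difference is that the paper phrases the feasibility check as ``reject assignments whose strip height exceeds $(1+\epsilon)\max\{w,h\}+O(\min\{w,h\}/\epsilon^{2})$'' before cutting, whereas you fold this into the enumeration; both versions guarantee that the correct guess survives with the claimed profit.
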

\begin{proof}
 Given an instance $I = (B,S)$,
 where $S$ is the set of  small squares with profits
 and $B=(B_1,B_2,\dots,B_c)$ is the set of rectangular bins,
 by the method in Lemma \ref{lemma:guessing},
 we  guess a subset $S_j \subseteq S$ such that
 $S_j$  has a feasible packing in $B$ and  $p(S_j) \ge (1-O(\epsilon))OPT(I)$.

In an instance $I_p = (B,S_p)$,
we first guess $h_i^j$ by Lemma \ref{lemma:mapping}.
Then according to $h_i^{j}$ value
we assign the items to each bin and 
 use the APTAS in Lemma \ref{lemma:stripPTAS}
to pack items in each bin, where $1 \le j \le c$.
If in each bin $(w,h)$ 
the height used by the APTAS in Lemma \ref{lemma:stripPTAS}
is bounded by $(1+\epsilon)\max\{w,h\} + O(\min\{w,h\}/\epsilon^{2})$,
then we keep the assignment otherwise reject the assignment.
Since there is a $S_p$ such that $S_p$ has a feasible packing in $B$.
After all the guesses,
there is at least one assignment remained.
For the assignment, we apply the  APTAS in Lemma \ref{lemma:stripPTAS}
and the cutting techniques in Lemma \ref{lemma:cutting} 
 such that  in each bin, 
 the profit keeps at least $(1-\epsilon)$ times the optimal value.

 Hence we have this lemma.
\end{proof}
 
\section{Previous Algorithms for Packing Squares}
 Based on previous techniques used for 2D packing problem 
 \cite{BS06,  Harren06} and
 the {\em greedy} packing (which is given in appendix), 
  we  introduce a simple algorithm $A_1$ which is implied in 
 \cite{BS06,  Harren06}
 for packing a set of squares into a bin $(1,h)$,
 where $h \ge 1$.
There are two steps in $A_1$:
  first group all squares by their sizes and guess one group
   which does not significantly  affect the optimal packing  and delete it
   from the input list,
  then  pack large items by enumeration, lastly
   append small items in the ``gap'' of the bin.
 Next, we give the details of the two steps. 

\noindent {\bf Grouping}:   
For an integer $k = \lceil \frac{1} {\epsilon}\rceil$, 
where $ \epsilon < (2h+2h^2)^{-1}$ is  sufficiently small
and $h \ge 1$ is the bin height,  
we select $k$ points in the region (0,1], 
$P_1,\dots,P_k$ as follows 
\[
P_i =  \epsilon^{6^i} \textrm{ and }  1 \le i \le k. 
\]
Then the interval (0,1] is divided into $k+1$
intervals, $I_1,\dots,I_{k+1}$, 
where $I_i=(P_i,P_{i-1}]$, $2 \le i \le k$, $I_1 = (P_1,1]$ 
and $I_{k+1} = (0,P_k]$.

\noindent{\bf Notation:}
 In the following, given a  list $L$ of squares, $L_i$ denotes
the list in which all square's sides are in interval $I_i$, 
$w(L_i)$ denotes the total area of $L_i$,
$p(L_i)$  the total profits of  $L_i$,
$|L_i|$  the number of squares in  $L_i$,
 where $1 \le i \le k+1$.
 
%% \para {\bf Description of  Algorithm $A_1$}
\noindent{ \bf Packing:}
 \begin{enumerate}
 \item  Guess an index $i$ such that 
        $OPT(L-L_i) \ge (1-\epsilon)OPT(L)$.
  \item Get all   feasible packing for $L_{i-1} \cup \dots \cup L_1$, 
        pack each of them 
%%           (all items  can be packed in the bin $(1,h)$),
             into the bin, then
%%              use the method in Observation \ref{obse:soda04}
          partition the uncovered space into  rectangular bins(blocks) 
        in the method \cite{BS06} and append $L_{i+1} \cup \dots \cup L_{k+1}$
          into these bins by the {\em Greedy} algorithm.
 \item Output the one with the largest profit.
 \end{enumerate}

    Since there are $k+1$ sublists $L_1, \dots, L_{k+1}$ in $L$,
    then the guess in step 1 of $A_1$ is always feasible,
     where $k =  \lceil \frac{1}{\epsilon} \rceil$.
 %% then $p(L_{opt} \cap L_{i}) \le  \epsilon p(L_{opt})$.
   After selecting an index $i$,
  we define all items in $L_{i-1} \cup \dots \cup L_1$ as 
  {\em large} items and all items in $L_{i+1} \cup \dots \cup L_{k+1}$
  as {\em small} items.
  Note that if $i=1$ then there are no {\em large} items,
  and if $i= k+1$ then there are no {\em small} items.

 %% \para{\bf Analysis of Algorithm $A_1$:} 
  $A_1$'s worst case ratio is related to 
  the number of {\em large} items in $L_{opt}$ and 
  the rest area for small items,
  where  $L_{opt}$ is a sublist of $L$ to produce an optimal solution.

 \begin{fact} \cite{BS06}
  Given   {\em large} items with sides larger than
   $ \epsilon^{6^{i-1}}$ which can be packed in the bin  $(1,h)$,
  and {\em small} items with sides at most $ \epsilon^{6^i}$,
 where $i \ge 1$,
 if the total area of  all the squares   is at most 
 $h -  \epsilon^{4 \times 6^{i-1}-1} $, where $2h(1+h) < \epsilon^{-1}$,
 then  all  can be packed in the bin.%%  by the way,
%%  first packing large squares, 
%%  then appending small squares by the {\em greedy} algorithm 
  \label{fact:append}
 \end{fact}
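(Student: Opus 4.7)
The plan is to follow the standard two–phase strategy used in the Bansal--Correa--Kenyon--Sviridenko framework: first fix a packing of the large squares, carve out the uncovered region into rectangular blocks, and then fill the blocks with the small squares using a shelf--based greedy rule (NFDH). The existence of a valid placement of the large squares is given by hypothesis, so the nontrivial content of the fact lies entirely in controlling the wasted area when the small items are appended.

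First I would take any feasible packing of the large items inside $(1,h)$ and partition the complement into axis-aligned rectangles via horizontal cuts through the top and bottom edges of every large square. Each such rectangle has width at most $1$ and height at most $h$, hence perimeter at most $2(1+h)$. Since every large square has side exceeding $\epsilon^{6^{i-1}}$, and all large squares fit in $(1,h)$, their number is at most $h\cdot\epsilon^{-2\cdot 6^{i-1}}$, so the number $N$ of blocks produced by the partition is $O(h\cdot\epsilon^{-2\cdot 6^{i-1}})$. The constants in this $O(\cdot)$ are harmless because they will be absorbed together with the factor $h$ into the hypothesis $2h(1+h)<\epsilon^{-1}$.

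Next I would fill each block with small squares by the NFDH (Next Fit Decreasing Height) rule. The standard shelf analysis shows that for squares of side at most $s$, the area left uncovered in a rectangular block of width $w$ and height $h'$ is at most $s\cdot(w+h')$, bounded by $s$ times the block's perimeter. Taking $s=\epsilon^{6^i}$ and summing over all blocks, the total wasted area is at most
\[
N\cdot\epsilon^{6^i}\cdot 2(1+h)\;\le\; 2h(1+h)\,\epsilon^{6^i-2\cdot 6^{i-1}}\;=\;2h(1+h)\,\epsilon^{4\cdot 6^{i-1}}\;<\;\epsilon^{4\cdot 6^{i-1}-1},
\]
using $6^i=6\cdot 6^{i-1}$ in the exponent arithmetic and the hypothesis $2h(1+h)<\epsilon^{-1}$ in the final step. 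Since the total area of all squares is by assumption at most $h-\epsilon^{4\cdot 6^{i-1}-1}$, the wasted area budget is sufficient to accommodate all remaining small squares, so the process never fails.

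The only delicate point is making sure the partition of the complement is into \emph{rectangles} whose sides are bounded by $1$ and $h$ rather than into thin strips that could blow up the perimeter sum; this is the reason for cutting along the top and bottom edges of the large squares, which keeps $N$ linear in the number of large items. Given that, the calculation above is routine and yields precisely the bound stated in the fact.
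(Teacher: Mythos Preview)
The paper does not supply its own proof of this fact; it is quoted from \cite{BS06} without argument. Your sketch is exactly the Bansal--Correa--Kenyon--Sviridenko argument behind the cited result: fix a packing of the large squares, decompose the complement into $O(m)$ axis-aligned rectangles, fill each by NFDH, and bound the total NFDH waste by $\epsilon^{6^i}$ times the sum of block perimeters. The exponent arithmetic $6^i-2\cdot 6^{i-1}=4\cdot 6^{i-1}$ and the final appeal to $2h(1+h)<\epsilon^{-1}$ are precisely the intended steps.

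One point worth tightening: the phrase ``horizontal cuts through the top and bottom edges of every large square'' should be read as the trapezoidal decomposition, i.e.\ each cut extends only until it hits another large square or the bin boundary. If the horizontal lines are drawn all the way across the bin, a single large square can cross $\Theta(m)$ of the resulting strips and the block count becomes $\Theta(m^2)$, which would break your perimeter estimate $N\cdot 2(1+h)$. Your closing paragraph shows you have the right picture in mind; with the trapezoidal reading one indeed gets $N\le 3m+1$ and the calculation goes through (up to an absolute constant that the grouping parameters in \cite{BS06} are chosen to absorb). Alternatively, even with full horizontal cuts one can sum the block widths and heights directly---$\sum w_j\le 2m+1$ and $\sum h_j\le h+m$---and reach the same conclusion without ever needing a linear bound on $N$.
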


\begin{lemma}\cite{BS06,Harren06}
  After packing large items,
 if the rest area in the bin is at least $\epsilon^{-1}\delta$,
 then $A_1(L) \ge (1-2\epsilon)OPT(L)$, 
where $\delta=\epsilon^{4\times 6^{i-1}-1}$.
 \label{lemma:restarea}
\end{lemma}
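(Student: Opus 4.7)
The plan is to bound $A_1(L)$ from below by comparing with an optimal solution and tracking three sources of loss: the class $L_i$ thrown away in the guess step, the enumeration over large-item packings, and the Greedy packing of small items. Fix an optimal solution and split its profit as $X+Y+Z$, where $X$, $Y$, and $Z$ are the profits contributed by items in $L_1\cup\cdots\cup L_{i-1}$ (large), $L_i$ (discarded), and $L_{i+1}\cup\cdots\cup L_{k+1}$ (small), respectively. The guess in step~1 of $A_1$ ensures $Y\le \epsilon\,\opt(L)$, so $X+Z\ge (1-\epsilon)\,\opt(L)$.

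Because $A_1$ enumerates every feasible packing of the large items, one iteration reproduces the placement of the optimum's large items, scoring profit $X$; by hypothesis the uncovered area $A$ after this placement satisfies $A\ge \epsilon^{-1}\delta$. In that iteration $A_1$ partitions the uncovered region into rectangular blocks and greedily appends small items (sides at most $\epsilon^{6^i}$). Fact~\ref{fact:append} guarantees that any collection of small items of total area at most $A-\delta$ can be inserted into the uncovered region. In particular, the optimum's small items have total area at most $A$ and profit $Z$, so they witness a feasible knapsack solution of profit $Z$ within capacity $A$. Since Greedy processes items in decreasing profit-per-area order, a standard LP-relaxation argument for knapsack yields packed profit at least $(1-\delta/A)\,Z \ge (1-\epsilon)\,Z$, using $A\ge \epsilon^{-1}\delta$. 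Combining, $A_1(L)\ge X + (1-\epsilon)Z \ge (1-\epsilon)(X+Z) \ge (1-\epsilon)^2\,\opt(L) \ge (1-2\epsilon)\,\opt(L)$.

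The main obstacle is bridging Fact~\ref{fact:append}'s geometric guarantee with the knapsack LP view: Fact~\ref{fact:append} is stated existentially about packing large plus small items into the whole bin, whereas $A_1$ fills the uncovered area block-by-block after fixing the large-item placement. The key technical point to verify is that the block partition does not lose more than the $\delta$ slack already built into Fact~\ref{fact:append}, so that small items of total area $A-\delta$ really can be accommodated across the blocks. Once this is justified, the arithmetic above yields the desired $(1-2\epsilon)$ ratio.
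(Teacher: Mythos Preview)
The paper itself does not prove this lemma; it is quoted from \cite{BS06,Harren06} and used as a black box. Your outline is the standard argument from those sources: split $\opt(L)$ into large/discarded/small contributions $X+Y+Z$, use the enumeration over large-item packings to secure $X$, and convert the small-item packing into an area-knapsack that Greedy solves near-optimally via Fact~\ref{fact:append}.

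Two places deserve a little more care. First, the ``standard LP-relaxation argument'' does not directly yield $(1-\delta/A)Z$: if Greedy stops after packing items $1,\dots,m$ (in decreasing profit density), Fact~\ref{fact:append} only tells you $\sum_{j\le m+1} w_j > A-\delta$, so the clean inequality is
\[
\textrm{Greedy profit} \;=\; F\Bigl(\sum_{j\le m} w_j\Bigr)\;\ge\; F\bigl(A-\delta-w_{m+1}\bigr)\;\ge\;\Bigl(1-\tfrac{\delta+w_{m+1}}{A}\Bigr)F(A)\;\ge\;\Bigl(1-\tfrac{\delta+w_{m+1}}{A}\Bigr)Z,
\]
where $F$ is the fractional area-knapsack optimum. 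Since every small item has area at most $\epsilon^{2\cdot 6^i}\ll\delta$, the extra $w_{m+1}$ term is absorbed into the $\epsilon$ slack and you still land at $(1-O(\epsilon))Z$; but the sentence as written hides this step. Second, your enumeration recovers the optimal \emph{set} of large items, not necessarily the optimal \emph{placement}; this is harmless because the rest area depends only on the set and Fact~\ref{fact:append} applies to any feasible placement.

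The obstacle you flag at the end---that Fact~\ref{fact:append} speaks about the whole bin while $A_1$ fills rectangular blocks one by one---is exactly the content imported from \cite{BS06}: their block decomposition is constructed so that the cumulative area lost across all blocks is bounded by~$\delta$. Once that is granted, your arithmetic gives the claimed $(1-2\epsilon)$ factor, so your proposal is essentially the intended proof.
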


%%  By Fact \ref{fact:append},
%%  the wasted area in the bin $(1,h)$ is at most 
%%  $\delta=\epsilon^{4\times 6^{i-1}-1}$.
%%  On the other hand,
%%  we can see that any {\em large} item has its area at least 
%%  $\epsilon^{2\times 6^{i-1}}$,
%%  which is much larger than the wasted area.
%%  So,  if  {\em small} squares have
%%  high profits then we can remove one {\em large} square in the bin 
%%  and pack  high profitable {\em small} squares.
 
\begin{lemma} \cite{Harren06}
 Let $m $ be the number of {\em large} items in $L_{opt}$.
 i)  $A_1(L) \ge (1- \epsilon) OPT(L)$ if $m= 0$;
 ii) else 
 $ A_1(L) \ge \frac{m}{m+1}(1-\epsilon)OPT(L)$.
 \label{lemma:m+1} 
\end{lemma}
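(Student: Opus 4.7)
The plan is to case-split on $m$, absorbing the $\epsilon$ loss from the grouping guess into the final bound. Since the $k+1 \ge 1/\epsilon$ sublists $L_1, \ldots, L_{k+1}$ partition $L$, pigeonhole ensures some class $L_i$ contributes at most $OPT(L)/(k+1) \le \epsilon\,OPT(L)$ of the optimal profit, so the guess in Step~1 is always valid and costs at most an $\epsilon$ factor relative to $OPT(L)$.

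For part~(i), $m=0$ means $L_{opt}$ lies entirely inside $L_{i+1}\cup\cdots\cup L_{k+1}$. The empty arrangement of large items is one of the configurations enumerated in Step~2, so the entire bin is available for the appending phase. The rest area is at least $h \ge 1$, which trivially exceeds the threshold $\epsilon^{-1}\delta$ of Lemma~\ref{lemma:restarea}; that lemma then yields $A_1(L) \ge (1-2\epsilon)\,OPT(L)$, and rescaling $\epsilon$ gives the stated bound.

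For part~(ii), let $p_1 \ge p_2 \ge \cdots \ge p_m$ be the profits of the large items in $L_{opt}$ and distinguish two subcases. In the first, $p_m > OPT(L)/(m+1)$, so $\sum_{j=1}^m p_j > \tfrac{m}{m+1}OPT(L)$; since each large item has side $>\epsilon^{6^{i-1}}$, at most $O(\epsilon^{-2\cdot 6^{i-1}})$ of them fit in the bin, so Step~2's enumeration is polynomial and in particular tries the $L_{opt}$ arrangement of the $m$ large items, giving $A_1(L) \ge \tfrac{m}{m+1}OPT(L)$. In the second, $p_m \le OPT(L)/(m+1)$, I drop the least profitable large item and consider the residual target consisting of the remaining $m-1$ large items plus all small items of $L_{opt}$; its profit is at least $\tfrac{m}{m+1}OPT(L)$ and its large-item configuration is again caught by the Step~2 enumeration. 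The area freed by the dropped item is at least $\epsilon^{2\cdot 6^{i-1}}$, which for $\epsilon<1$ exceeds $\epsilon^{-1}\delta = \epsilon^{4\cdot 6^{i-1}-2}$, so Lemma~\ref{lemma:restarea} applies to this residual target and the appending phase retains a $(1-2\epsilon)$ fraction of the small profit. Combining gives $A_1(L) \ge (1-2\epsilon)\tfrac{m}{m+1}OPT(L)$, and rescaling $\epsilon$ yields the claim.

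The main obstacle I anticipate is the second subcase: one must verify that Lemma~\ref{lemma:restarea} can be applied against the \emph{residual} target rather than $OPT(L)$ itself, i.e., that the block decomposition of \cite{BS06} combined with the greedy appending loses at most an $O(\epsilon)$ fraction relative to the residual target's own small-item packing inside $L_{opt}$. This reduces to checking that the area vacated by the dropped large item, together with the uncovered region around the remaining $m-1$ large items, forms blocks large enough to feed into Fact~\ref{fact:append}; once that is in place the remaining bookkeeping is a routine rescaling of $\epsilon$.
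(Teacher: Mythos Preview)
The paper does not supply its own proof of this lemma; it is stated with a citation to \cite{Harren06} and no argument follows. So there is nothing in the paper to compare against, and the relevant question is whether your sketch stands on its own.

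Part~(i) and Subcase~1 of part~(ii) are fine. The weak point is exactly the one you flag in Subcase~2: you try to invoke Lemma~\ref{lemma:restarea} ``against the residual target'', but that lemma is phrased as a bound on $A_1(L)$ relative to $OPT(L)$, not relative to an arbitrary sub-instance, so the invocation is at best informal. The clean repair is to bypass Lemma~\ref{lemma:restarea} and use Fact~\ref{fact:append} directly. After deleting the least-profitable large item, the surviving $(m-1)$ large items together with all small items of $L_{opt}\setminus L_i$ have total area at most $h - \epsilon^{2\cdot 6^{i-1}} \le h - \delta$ (the removed square has side at least $\epsilon^{6^{i-1}}$, and $2\cdot 6^{i-1} < 4\cdot 6^{i-1}-1$ for $i\ge 2$, which is the only relevant range since $m\ge 1$ forces $i\ge 2$). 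Fact~\ref{fact:append} is precisely the guarantee delivered by the partition-and-append routine of \cite{BS06} used inside $A_1$; hence when $A_1$ enumerates this $(m-1)$-item large configuration, the appending phase has room for a set of small items of total area at least that of the residual's small part. Because the Greedy subroutine orders small items by profit density, any prefix it packs of a given total area has profit at least that of any other small subset of the same area, in particular at least the residual's small profit. Combining with the $(m-1)$ large items and the $\epsilon$ loss from the grouping step gives $A_1(L)\ge (1-O(\epsilon))\frac{m}{m+1}OPT(L)$, and rescaling $\epsilon$ finishes.

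So your plan is correct; just replace the appeal to Lemma~\ref{lemma:restarea} in Subcase~2 by a direct area-accounting via Fact~\ref{fact:append} plus the standard greedy-prefix argument, and the obstacle you anticipated disappears.
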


%% By Lemma \ref{lemma:vertex} and the definition for the large square,
%% we have the following lemma.
\begin{lemma}  \cite{Harren06}
 Algorithm $A_1$ is ran in polynomial time of $n$.
 \label{lemma:time}
\end{lemma}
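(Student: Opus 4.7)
The plan is to show that each of the three steps of $A_1$ runs in polynomial time in $n$, and that the number of outer iterations (over guesses and packings) is bounded by a constant depending only on $\epsilon$ and $h$. Multiplying yields a polynomial total running time.

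First, Step 1 of $A_1$ guesses an index $i \in \{1,\dots,k+1\}$ where $k = \lceil 1/\epsilon \rceil$, so there are only $O(1/\epsilon)$ possibilities, a constant. Grouping the input list $L$ into $L_1,\dots,L_{k+1}$ takes $O(nk)$ time. Next I would bound the number of large items that can be packed simultaneously: every large item lies in some $L_j$ with $j \le i-1$, so its side length exceeds $P_{i-1} = \epsilon^{6^{i-1}}$ and its area exceeds $\epsilon^{2\cdot 6^{i-1}}$. Since the bin $(1,h)$ has area $h$, the number $m$ of large items placed in any feasible packing satisfies $m \le h\,\epsilon^{-2\cdot 6^{i-1}}$, a constant once $\epsilon$ and $h$ are fixed.

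To enumerate all feasible packings of these $m$ large items, I would invoke a standard canonical-position argument: it suffices to consider packings in which the bottom-left corner of each large item coincides, horizontally, with either the left edge of the bin or the right edge of some other large item, and, vertically, with either the bottom of the bin or the top of some other large item. The number of such canonical configurations is bounded by a function of $m$ alone, hence a constant. For each canonical placement, the construction from the cited 2D bin packing work partitions the uncovered region into rectangular blocks in time polynomial in $m$, and the Greedy algorithm then appends items from $L_{i+1}\cup\dots\cup L_{k+1}$ in time polynomial in $n$.

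Combining these bounds, the overall running time is $O(1/\epsilon)$ (choices of $i$) times a constant (number of canonical packings of large items) times a polynomial in $n$ (partitioning plus Greedy), which is polynomial in $n$. I expect the main obstacle to be the canonical-position enumeration for the large items: one must argue rigorously that restricting to canonical placements loses no feasible packing and that their number is bounded solely by $m$, but this is standard in the packing literature and the argument is essentially routine given that $m$ is constant.
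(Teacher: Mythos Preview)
The paper itself does not supply a proof of this lemma; it simply cites \cite{Harren06}. So there is no ``paper's own proof'' to compare against, and your task is really to give a self-contained argument.

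Your outline is mostly sound, but there is one genuine gap. You correctly bound the number $m$ of large items that can \emph{fit} simultaneously in the bin by the constant $h\,\epsilon^{-2\cdot 6^{i-1}}$. However, the set $L_{i-1}\cup\dots\cup L_1$ of large items in the \emph{input} may contain up to $n$ items, not $m$. Step~2 of $A_1$ asks for all feasible packings of this set, which means enumerating all \emph{subsets} of size at most $m$ drawn from these possibly $n$ large items, and then, for each subset, enumerating the canonical placements. The subset choice contributes a factor of $\sum_{j\le m}\binom{n}{j}=O(n^{m})$, which is polynomial in $n$ but certainly not a constant. Your sentence ``the number of such canonical configurations is bounded by a function of $m$ alone, hence a constant'' conflates the placement count (which is indeed $O(1)$ for a fixed $m$-subset) with the total enumeration, and your final tally ``$O(1/\epsilon)$ times a constant times a polynomial in $n$'' undercounts by this $O(n^{m})$ factor.

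The fix is easy and does not threaten the conclusion: insert the $O(n^{m})$ subset-enumeration factor, note that $m$ is a constant depending only on $\epsilon$ and $h$, and conclude that the total running time is $n^{O(1)}$. With that correction the argument is essentially the standard one from \cite{Harren06}.
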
 

\section{  Corner packing}
\label{Sec:NFDHCorner}
 To pack squares into a rectangular bin,
 there are a lot of approaches, the most two studies are 
 {\em NFDH} \cite{BS06} and {\em BL}.
 In this section, we first give a new approach, called {\em Corner} packing,
 which includes the above two approaches.
 Then we analyze the {\em corner} packing later and show that
   it is  one of the key points for improving
 the worse case ratio.
%%  in which we use  {\em greedy} packing for small items
%%  and BL (Bottom-left) packing for large items.

%% Next, we propose a novel packing called {\em corner} packing,
%% (shown as in Fig.~\ref{fig:NFDHDeco}(b)),
%% which helps us in packing large items into the bin.
 During packing  squares into the rectangular bin,
 the uncovered space of the bin may get into the rectilinear polygons.
The {\em corner} packing (shown as in Fig.~\ref{fig:NFDHDeco}(b))
  can be regarded as a sequence of packing.
 Every time when one square is packed into the bin, we  obey
 the following rules:
 \begin{itemize}
\item select one vertex of the current rectilinear polygons 
   at which the interior angle is 90 degrees,
\item place the square such that one of its corners coincides with
    the vertex we selected. After packing, we get the new rectilinear
    polygons.
 \end{itemize}
Note that both NFDH and  BL \cite{jz04}  belong to {\em Corner} packing,
where BL packing is to pack squares in the bin as bottom as possible
then as left as possible.
 
\begin{lemma} 
 \label{lemma:vertex}
 Assume $n$ squares are packed in the bin by corner packing,
 then \\
 i) there are at most $4+2n$ vertices of all the rectilinear polygons, \\
 ii) there are at most $2^n (n+1)!$ possibilities to pack these $n$
     squares in the bin by corner packing.
\end{lemma}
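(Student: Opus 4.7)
The plan is to prove both parts together by induction on the number of squares already packed, using as the key invariant a bound on the number of vertices of the current rectilinear polygon describing the free (uncovered) region of the bin.

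For part (i), the base case is immediate: the bin is a single rectangle with $4$ vertices. For the inductive step, I would carefully track what a single corner placement does to the vertex count. Suppose after $k$ placements the polygon has at most $4+2k$ vertices, and the $(k+1)$-st square $q$ of side $s$ is placed at a chosen $90^\circ$ vertex $v$. By rotating coordinates I may assume $v=(0,0)$ and the two polygon edges at $v$ run along the positive $x$- and $y$-axes, so that $q$ occupies $[0,s]\times[0,s]$. Removing $q$ from the free region deletes the vertex $v$ from the boundary and creates at most three new vertices at $(s,0)$, $(s,s)$, $(0,s)$; each of these may coincide with a pre-existing vertex (if $q$'s side exactly matches an adjacent edge length), but in the worst case all three are new. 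Hence the vertex count increases by at most $2$, giving $4+2(k+1)$ and proving (i). I also need a short remark that placing a square at a $90^\circ$ vertex keeps the free region simply connected, so the structure ``rectilinear polygon'' is preserved.

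For part (ii), I would leverage (i) together with the standard identity for simply connected rectilinear polygons: if $C$ and $R$ denote the numbers of $90^\circ$ (convex) and $270^\circ$ (reflex) vertices, then $C-R=4$. Combined with $C+R=V\le 4+2k$ after $k$ placements, this yields $C\le k+3\le 2(k+1)$. A corner packing of the $n$ squares is a sequence of $n$ placements, each determined (after the order of squares is fixed) by the choice of a $90^\circ$ vertex of the current polygon; at step $i$ there are thus at most $2(i+1)$ such choices. Multiplying across all steps,
\[
\prod_{i=1}^{n} 2(i+1) \;=\; 2^{n}\cdot 2\cdot 3\cdots (n+1) \;=\; 2^{n}(n+1)!,
\]
which is the desired bound.

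The main obstacle will be the vertex-accounting in the inductive step: one has to argue carefully that no placement can raise the count by more than two, even in corner cases where $q$ simultaneously touches several existing edges, and that degenerate placements which make the free region non-simply-connected cannot occur under the corner-packing rules (so that the identity $C-R=4$ remains available in part (ii)). The combinatorial multiplication itself is routine once the per-step vertex bound from (i) is in hand.
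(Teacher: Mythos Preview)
Your argument for part~(i) is essentially the paper's: induct on the number of placed squares, noting that a corner placement deletes one vertex and creates at most three, so the net change is at most $+2$.

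For part~(ii) you take a different and unnecessarily delicate route. You invoke the identity $C-R=4$ for a simply connected rectilinear polygon to bound the number of $90^\circ$ vertices. Two issues arise. First, there is an arithmetic slip: from $C-R=4$ and $C+R\le 4+2k$ one gets $C\le k+4$, not $k+3$. Second, and more importantly, the identity $C-R=4$ requires the free region to remain a \emph{single} simply connected polygon; the paper does not assume this (the statement of the lemma speaks of ``all the rectilinear polygons'' in the plural), and verifying it would be extra work that your proposal only promises as ``a short remark''. The paper avoids all of this by the trivial observation that the number of $90^\circ$ vertices is at most the total number of vertices: after $i-1$ placements there are at most $4+2(i-1)=2(i+1)$ vertices in total, hence at most $2(i+1)$ choices for the $i$-th square, and the product $\prod_{i=1}^{n}2(i+1)=2^n(n+1)!$ follows immediately. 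Your final product is correct, but the justification via $C-R=4$ should be replaced by this direct bound $C\le V$ from part~(i).
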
 
(refer to the proof in the appendix).

% \begin{lemma}
%  Any optimal packing is reducible to a corner packing.
%  \label{lemma:optimalCP}
% \end{lemma}  
% \begin{proof}
%  Any optimal packing is reducible to a  BL packing in \cite{jz04}.
%  Since BL packing belongs to corner packing,
%   we have this lemma.
% \end{proof}

 \section{ A Refined Algorithm for Packing  into a Rectangular Bin}
 Let  $m $ be the number of {\em large} items in $L_{opt}$,
 where $L_{opt}$ is a sublist of $L$ to get the optimal solution.
 By Lemma \ref{lemma:m+1},
 if $m$ is very large, 
 then algorithm $A_1$ has a good performance.
 So,  to improve  algorithm $A_1$,
 we have to study the case in which $m$ is very small.
 Note that when the bin is a unit square,
 the situation becomes a relatively simple.
 Since when $m=1$ we can transform the original packing into a special
 strip packing;
 when $m=2,3$ we can estimate there must be much more space left for
 small squares than the wasted area.
 This is  the main idea in Harren's paper\cite{Harren06}.
 If the bin is not longer a unit square, 
 his algorithm does not work very well.
 To improve algorithm $A_1$,
 we are faced with two problems:
 \begin{itemize}
 \item  How to pack a few large items such that the packing does not
        affect too much the future small items packing?
        (how to allocate large items in the bin.)
 \item  How to pack small items in the gaps (rectilinear polygons)
        generated after packing large items?
 \end{itemize}
 Next, we  give our solutions
 for the above questions and propose a refined algorithm called $A_2$ with
 the worst case ratio $\frac65 + \epsilon$.

\subsection{ Packing a few large items}
Recall that, given an $i >1$,
if a square's  side length is at least $\epsilon^{6^{i-1}}$ then
 it is called {\em large},
else its side length is at most $\epsilon^{6^{i}}$ then
it is called {\em small}.
And there is a gap between large items and small items,
which is very important for packing large items.

Next we show that {\em corner} packing is a good packing which
does not  significantly affect future small square packing
when there are a few large items.

%% \begin{defi} $P^{*}$:
%%  A packing $P$ of $L$ into a bin  is called $P^{*}$ 
%%  if all large squares are packed by {\em corner} packing.
%% \end{defi}

\begin{lemma}
Let $m $ be the number of {\em large} items in $L_{opt}$.
If $m \le 4$, then $(1-\epsilon)OPT(L) \le OPT(L, *)$,
where $OPL(L)$ is the optimal value of packing $L$ into the bin
and $OPT(L,*)$ is  the optimal value of packing $L$ into the bin such
that all large squares are packed by {\em corner packing}. 
%%shown as Fig.~\ref{fig:position}.
\label{lemma:corner}
\end{lemma}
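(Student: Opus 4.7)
The plan is to take an optimal packing $P^{*}$ of some sublist $L_{opt}\subseteq L$ that realizes $OPT(L)$ and contains $m\le 4$ large squares, and to transform it into a feasible packing of (most of) the same items in which every large square is placed according to the corner-packing rule. If this transformation loses at most an $O(\epsilon)$ fraction of the profit, then a suitable feasible witness for $OPT(L,*)$ yields $OPT(L,*)\ge (1-\epsilon)OPT(L)$ after rescaling $\epsilon$.

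The canonical target configuration is to place the large squares of $P^{*}$, sorted by non-increasing side length, at distinct corners of the bin $(w,h)$, each flush against its assigned corner. The hypothesis $m\le 4$ is used exactly here: a bin has four corners, and at most four corner-placed squares of sides at most $\min(w,h)/2$ never overlap. (If a large square has side greater than $\min(w,h)/2$, it uniquely determines which half of the bin it occupies and the argument recurses with $m-1$ squares.) Each such placement is, by definition, a corner packing in the sense of Section~\ref{Sec:NFDHCorner} since each square is placed at an interior-$90^{\circ}$ vertex of the current rectilinear polygon.

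To relocate the small squares of $P^{*}$, I would process the large squares one at a time, sliding the $j$-th large square along an axis-aligned L-shaped path from its position in $P^{*}$ to its target corner. Each such slide sweeps out at most two axis-parallel rectangular strips, whose short sides equal the sliding square's side (at least $\epsilon^{6^{i-1}}$) and whose long sides are at most the bin dimensions. Small squares lying entirely in a swept strip are evicted; because small squares have side at most $\epsilon^{6^{i}}$, each swept strip is very wide relative to the small squares it contains, so the cutting lemma (Lemma~\ref{lemma:cutting}) applied inside each strip guarantees the evicted items carry only an $O(\epsilon)$ fraction of the small-square profit that the strip contained. The rest of the small squares, together with the items originally surrounding the large square, shift to occupy the union of the strip and the vacated region at the large square's old position. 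Summing the $O(\epsilon)$ loss over the $m\le 4$ slides gives a total loss of $O(\epsilon)\cdot OPT(L)$.

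The main obstacle will be the geometric bookkeeping when the slides interfere: a later slide's L-path could intersect an already corner-placed large square, or could pass through a region whose small-square packing has already been modified. I expect the cleanest way to deal with this is to fix the slide order (largest square first) and the corner assignments so that each L-path is confined to a sub-rectangle of the bin that does not contain any previously fixed large square, allowing Lemma~\ref{lemma:cutting} to be invoked independently for each slide. Since $m$ is bounded by the absolute constant $4$ and the gap between the large and small side lengths is of order $\epsilon^{-5\cdot 6^{i-1}}$, verifying that such an ordering always exists is a finite case analysis rather than an algorithmic difficulty.
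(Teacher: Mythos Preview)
Your high-level plan---move the large squares to corner positions one at a time and absorb the damage to the small-square packing via Lemma~\ref{lemma:cutting}---is exactly what the paper does. But two concrete steps in your execution do not work as written.

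\medskip
\textbf{The cutting lemma is misapplied.} You write that all small squares \emph{lying entirely in a swept strip} are evicted, and that Lemma~\ref{lemma:cutting} bounds the evicted profit by an $O(\epsilon)$ fraction. That is not what the lemma says. The swept strip has short side equal to the large square's side $s\ge\epsilon^{6^{i-1}}$, which can be close to~$1$; its small-square content can carry essentially all of the small profit. Lemma~\ref{lemma:cutting} only lets you delete a slice of width $O(\epsilon^{6^i})$ (thin relative to the strip, since small items have side $\le\epsilon^{6^i}$) at a cost of an $O(\epsilon^{6^i}/s)$ fraction. The paper uses it in precisely this thin-slice sense: fixing one large square $a$, it cuts the bin into two rectangles $I,II$ and a rectilinear piece $P$ containing $a$; enlarges $I,II$ by $2\epsilon^{6^i}$ so that every small item straddling an $I$--$P$ or $II$--$P$ boundary fits entirely inside $I_\epsilon$ or $II_\epsilon$; applies Lemma~\ref{lemma:cutting} to shrink $I_\epsilon,II_\epsilon$ back to $I,II$ at cost $(1-4\epsilon^{6^i})$; and then observes that $I\cup II\cup P$ with $a$ pushed to the corner of $P$ is a corner placement of $a$. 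Iterating this for $b,c,d$ gives the factor $(1-4\epsilon^{6^i})^4\ge 1-\epsilon$. No sliding, no swept regions, no interference case analysis---the ``next bin'' is simply the rectilinear region left after the previous step, and the same three-piece decomposition applies again.

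\medskip
\textbf{The target configuration is too restrictive.} Corner packing in Section~\ref{Sec:NFDHCorner} places each successive square at a $90^\circ$ vertex of the \emph{current} rectilinear polygon, not necessarily at one of the four corners of the original bin. Your assignment ``distinct corners of $(w,h)$'' forces you into the side-length $\le\min(w,h)/2$ discussion and the recursive patch, neither of which is needed: the paper's iterative argument naturally produces a corner packing in the general sense, and never needs the four squares to sit at the four bin corners simultaneously.

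\medskip
In short, keep your one-square-at-a-time induction, but replace the slide-and-evict step by the partition $\to$ enlarge by $2\epsilon^{6^i}$ $\to$ cutting-lemma shrink $\to$ reassemble step; this is both correct and removes the interference case analysis you were anticipating.
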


 \begin{proof}
  Here, we just give the proof when $m=4$, since the proof for $m=1,2,3$ is 
  involved.
  Let $a,b,c,d$ be the four large squares in an optimal packing $L_{opt}$.
  Without loss  of generality assume $a,b,c,d$ are placed in  the bin 
  as Fig.~\ref{fig:optimal1}(1). 
 %% (We leave the proof for other cases in the appendix)
  Note that a large item has side at least $\epsilon^{6^{i-1}}$
  and a  small item has side at most  $\epsilon^{6^{i}}$,
  where $i \ge 1$.
  We cut the bin into three parts, 
  two rectangular blocks $I=(x_1,y_1)$, $II=(x_2,y_2)$ and 
  a rectilinear polygon $P$ as shown as Fig.~\ref{fig:optimal1}(2).
    \begin{figure}[htbp]
  \begin{center}
  \includegraphics[scale=0.6]{./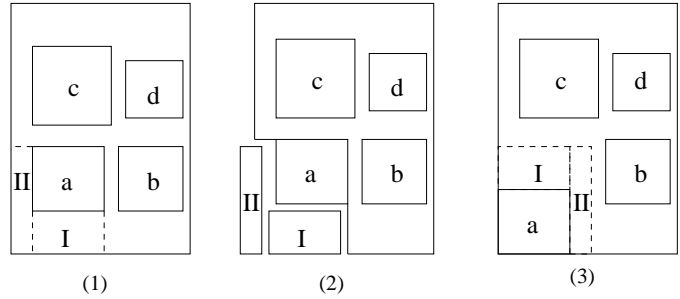}
  \caption{ An optimal packing vs. its corner packing}
  \label{fig:optimal1}
  \end{center}
\end{figure}
 Now we define two new rectangular blocks 
 $I_{\epsilon}=(x_1+ 2\epsilon^{6^i},y_1)$
 and $II_{\epsilon}=(x_2, y_2+2\epsilon^{6^i} )$.
 Then we have 
 \begin{equation}
   OPT(L,B) \le OPT(L, I_{\epsilon} \cup II_{\epsilon} \cup P),
  \label{eqn:expending}
 \end{equation}
  where $OPT(L,B)$ is the optimal value of packing $L$ into the bin $B$
  and  $OPT(L, I_{\epsilon} \cup II_{\epsilon} \cup P)$
  is the optimal value of packing $L$ into three rectilinear polygons
  $I_{\epsilon} \cup II_{\epsilon} \cup P$.
  This can be seen as follows,
  all squares packed into the bin $B$  
 as shown in Fig.~\ref{fig:optimal1}(1) can 
  be packed into three rectilinear polygons
  $I_{\epsilon} \cup II_{\epsilon} \cup P$.
  By Lemma \ref{lemma:cutting},
  for any list $L$ of small squares,
  we have 
 \[
   (1-4\epsilon^{6^i})OPT(L, I_{\epsilon}) \le OPT(L, I) \textrm{ and }
   (1-4\epsilon^{6^i})OPT(L, II_{\epsilon}) \le OPT(L, II).
 \] 
 Then
 \begin{equation}
 (1-4\epsilon^{6^i}) OPT(L,I_{\epsilon} \cup II_{\epsilon} \cup P ) \le 
  OPT(L, I \cup II \cup P ).
  \label{eqn:upperbound}
\end{equation}
So,
 by (\ref{eqn:expending}), (\ref{eqn:upperbound}),
\[
  (1-4\epsilon^{6^i}) OPT(L,B) \le OPT(L, I \cup II \cup P ). 
\]
And we have $OPT(L, I \cup II \cup P ) \le OPT(L,*a)$,
where $OPT(L,*a)$ is the optimal value of packing $L$ into the bin 
with $a$ at one corner, shown as in Fig.~\ref{fig:optimal1}(3).
Hence
\begin{equation}
 (1-4\epsilon^{6^i}) OPT(L,B) \le  OPT(L, *a ). 
 \label{eqn:a-corner}
\end{equation}
By the similar proof, we have 
\[
  (1-4\epsilon^{6^i})OPT(L,*a) \le OPT(L, *ab )
\]
and 
 \[
  (1-4\epsilon^{6^i})OPT(L,*ab) \le OPT(L, *abc ), \textrm{ }
  (1-4\epsilon^{6^i})OPT(L,*abc) \le OPT(L, *abcd )
\]
 where $OPT(L, *X)$ is the optimal value of packing a set $X$ into the bin 
 with all the items in $X$ at corners of the bin.
 Therefore, we have 
 \[
 OPT(L, *abcd) \ge (1-4\epsilon^{6^i})^4   OPT(L,B) \ge  (1-\epsilon) OPT(L,B).
\]
 The last inequality follows from $\epsilon \le 1/2 $.

Hence, this lemma holds.
 \end{proof}

\subsection{ Packing small items into rectilinear polygons}
 After packing large items,
 the uncovered space in the bin  may be a set of rectilinear 
 polygons.
 Our strategy for packing small squares into the polygons are below:
\begin{itemize}
 \item Dissect the polygons into rectangular blocks such that
       the optimal value of packing small squares into the blocks
       is at least $(1-\epsilon)OPT_p$, where 
        $OPT_p$ is the optimal value for packing
       small squares into the polygons. 
%%        Note that if possible, we require that every block has one side length  
%%         at least $ \epsilon^{6^{i-1}}$ 
%%        (every large item has side at least $ \epsilon^{6^{i-1}}$).
 \item To pack small items into blocks,
       we use the PTAS in Section \ref{sec:resources} of
       packing squares into rectangular bins with large resources. 
%        we use the techniques used in Multiple Knapsack problem \cite{CS05},
%        guess a polynomial number of candidate sublists 
%        of small squares such that
%        one of the sublist has a feasible packing and profit at least 
%        $(1-\epsilon)OPT_b$, where $OPT_b$ is the optimal value for packing
%        small squares into the blocks.
       
%  \item For every sublist, there is a polynomial number of possibility to
%        approximately assign all squares in the sublist 
%        into a constant number of blocks without losing too profits,
%        and for every possibility, we exploit the techniques used in strip
%        packing \cite{FGJ04, KR00} to pack items in each block.
\end{itemize}

\noindent{\bf Dissection:}
 After packing few large squares into the bin by {\em corner} packing,
 we dissect the rectilinear polygons into rectangular blocks,
 such that the dissection does not affect the optimal packing insignificantly.

%%  Recall that 
%%  every block has one of its sides at least $ \epsilon^{6^{i-1}}$ 
%%  since every large item has side at least $ \epsilon^{6^{i-1}}$.  
%% Next we show our dissection does not affect the optimal packing.
%%   \begin{figure}[htbp]
%%   \begin{center}
%%   \includegraphics[scale=0.6]{./positions.eps}
%%   \caption{ Pack few {\em large} items into the bin}
%%   \label{fig:position}
%%   \end{center}
%% \end{figure}
\begin{lemma}
 If there are at most 4 large squares packed,
 and  the total area of the large squares packed
  is at least $h -\epsilon^{4\times 6^{i-1}-2}$, 
 then there exist a dissection of the polygons (the uncovered space 
 of the bin) into blocks such that 
\[
   OPT_b \ge (1-\epsilon)OPT_p, 
\]
where $OPT_b$ ($OPT_p$) is the optimal value of packing small squares
into blocks (polygons).
\label{lemma:dissect}
\end{lemma}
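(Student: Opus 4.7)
The plan is to exploit the bounded combinatorial complexity of the uncovered region produced by corner packing, together with a shifting argument based on Lemma~\ref{lemma:cutting}. First, by Lemma~\ref{lemma:vertex}, after placing at most four large squares by corner packing, the uncovered rectilinear polygon $P$ has at most $4+2\cdot 4=12$ vertices, and in particular only a constant number of reflex vertices. So $P$ admits a dissection into a constant number $t=O(1)$ of axis-aligned rectangular blocks $R_1,\ldots,R_t$ by extending, from each reflex vertex, a single horizontal or vertical cut segment inside $P$ until it hits the boundary.

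Next I will argue that this dissection can be performed while losing at most an $\epsilon$ fraction of $OPT_p$. Start from an optimal packing $\pi^*$ of small squares (side $\le \epsilon^{6^i}$) in $P$, and call a square \emph{bad} if its interior crosses one of the $O(1)$ cut segments; the remaining \emph{good} squares already lie in a single block, so they yield a feasible packing in $R_1\cup\cdots\cup R_t$. It therefore suffices to show that the total profit of bad squares is at most $\epsilon\cdot OPT_p$.

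To bound the bad profit I would use a shifting argument reminiscent of Lemma~\ref{lemma:cutting}. Each cut segment $\ell$ emanates from the corner of a large square whose side is at least $\epsilon^{6^{i-1}}$, so $\ell$ can be translated perpendicularly by any amount $t\in[0,\epsilon^{6^{i-1}}]$ without altering the combinatorial type of the dissection. Among the $\Theta(\epsilon^{6^{i-1}}/\epsilon^{6^i})=\Theta(\epsilon^{-5\cdot 6^{i-1}})$ disjoint strips of width $2\epsilon^{6^i}$ in this translation range, by averaging there is a position where the small squares of $\pi^*$ contained in the width-$2\epsilon^{6^i}$ neighborhood of $\ell$ have total profit at most
\[
\frac{2\epsilon^{6^i}}{\epsilon^{6^{i-1}}}\cdot OPT_p \;=\; 2\epsilon^{5\cdot 6^{i-1}}\cdot OPT_p.
\]
Repositioning each of the $O(1)$ cut segments in turn and summing gives a total bad profit of $O(\epsilon^{5\cdot 6^{i-1}})\cdot OPT_p\le \epsilon\cdot OPT_p$ for $i\ge 1$, which is the desired bound.

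The main obstacle I expect is the bookkeeping for the translation ranges: one must verify that for each reflex vertex, the corresponding cut segment really admits an $\epsilon^{6^{i-1}}$-length perpendicular slack without leaving $P$ or colliding with another large square. This is where the area hypothesis $\mathrm{area}(P)\le \epsilon^{4\cdot 6^{i-1}-2}$ enters crucially: it forces $P$ to be thin around each large square, so that the cut segments are short and the big gap between large-square sides ($\ge\epsilon^{6^{i-1}}$) and small-square sides ($\le\epsilon^{6^i}$) produces enough room to shift. Once this geometric claim is verified case by case on the possible corner-packing configurations of up to four large squares, the averaging argument above goes through and yields $OPT_b\ge(1-\epsilon)OPT_p$.
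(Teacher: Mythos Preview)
Your proposal is in the right spirit but differs from the paper's argument and contains a genuine gap at exactly the step you flagged.

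The paper does \emph{not} shift cuts. It fixes an explicit dissection (displayed case by case in Figures~\ref{fig:our4} and~\ref{fig:2cases} for up to four corner-packed squares), chosen so that in all but one exceptional configuration each block $B_j=(w_j,h_j)$ has its longer side $h_j\ge \epsilon^{6^{i-1}}$, because every block runs along the side of some large square. Then it applies Lemma~\ref{lemma:cutting} block by block: expanding $B_j$ to $B_j'=(w_j,\,h_j+2\epsilon^{6^i})$ makes $\bigcup_j B_j'$ contain every small square of any packing of $P$, so $OPT(L,B')\ge OPT(L,P)$, while Lemma~\ref{lemma:cutting} (with $h_j\ge \epsilon^{6^{i-1}}$) gives $OPT(L,B_j)\ge(1-O(\epsilon))\,OPT(L,B_j')$. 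The area hypothesis is used only for the single exceptional configuration (case~(d)), where a fifth block $B_5$ fails to have a long side; there the area bound forces $w_5\le \epsilon^{2\cdot 6^{i-1}}$, so $B_5$ is negligible compared with $B_4$ and may be discarded.

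Your shifting argument breaks at the translation step. A cut emanating from a reflex vertex $v$ generally cannot be translated perpendicularly by $\epsilon^{6^{i-1}}$ while still yielding a \emph{rectangular} dissection of $P$: moving the cut toward the adjacent large square leaves $P$ immediately, and moving it away turns one of the two pieces back into an L-shape. (Already with a single large square at a corner of the bin, the unique horizontal cut through the reflex vertex is the only position that produces two rectangles.) The area hypothesis makes $P$ thinner and therefore \emph{reduces}, not increases, any slack for moving cuts inside $P$. What actually supplies the $\epsilon^{6^{i-1}}$ scale is the \emph{length} of each block along a large square, and that is precisely what the paper exploits by keeping the cuts fixed, expanding the blocks, and invoking Lemma~\ref{lemma:cutting}; the averaging happens inside each block, not over cut positions.
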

\begin{proof}
 In this proof, we just give our dissection for four large squares packed,
 shown as in Fig.~\ref{fig:our4} and \ref{fig:2cases} (by dotted lines),
 since the number of large squares is less than 4,
 we have the similar dissection.

   \begin{figure}[htbp]
  \begin{center}
  \includegraphics[scale=0.5]{./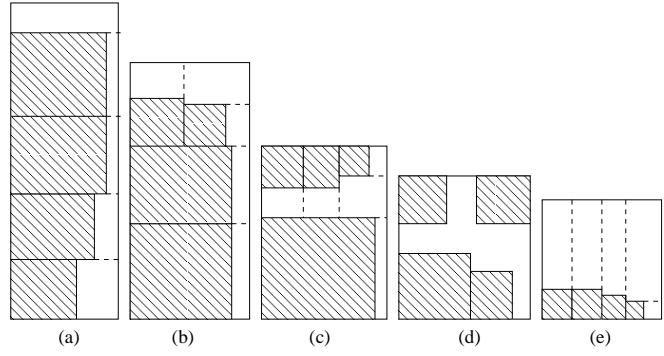}
  \caption{Possible packing}
  \label{fig:our4}
  \end{center}
\end{figure}
 \begin{figure}[htbp]
  \begin{center}
  \includegraphics[scale=0.5]{./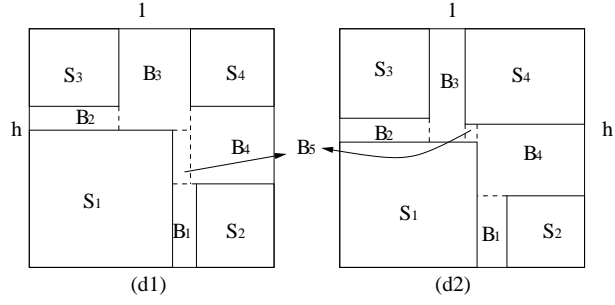}
  \caption{Two possibilities for the case (d)}
  \label{fig:2cases}
  \end{center}
\end{figure}

Except for case (d),
if  the rectilinear polygons  are dissected into blocks
 as shown in Fig.~\ref{fig:our4},
then we have a set of five rectangular blocks 
$B=\{B_i\}$, 
where $B_i=(w_i,h_i)$ and $w_i \le h_i$ for $1 \le i \le 5$.
(Otherwise we can exchange $w_i$ and $h_i$)
 Using the same techniques in Lemma \ref{lemma:corner},
 we define a new set of five
 blocks $B^{'}=\{B_i^{'}\}$, 
where $B_i^{'}=(w_i ,h_i + 2 \epsilon^{6^{i}} )$ for $1 \le i \le 5$.
Let $P$ be the polygon(s) after packing large squares in the bin,
and $L$ be a list of small squares.
Since each small square has side length at most  $\epsilon^{6^i}$,
 we have 
 \[
  OPT(L,B') \ge OPT(L,P) \ge OPT(L,B),
 \]
 where OPT(L,X) is the optimal value for packing $L$ into $X$.
Since  each large square has side length at least  $\epsilon^{6^{i-1}}$ 
and $h_i \ge w_i$,
we have $h_i \ge \epsilon^{6^{i-1}}$.
 By Lemma \ref{lemma:cutting}, we have
\[
   OPT(L,B)  \ge (1-\epsilon)OPT(L,B^{'}) .
\]
Hence we have  
$OPT_b = OPT(L,B) \ge (1-\epsilon)OPT(L,P) =(1-\epsilon)OPT_p$.

Next, we study  the case (d) of Fig.~\ref{fig:our4}
and prove that our strategy  shown in Fig.~\ref{fig:2cases}
still works.
There are two possibilities for the case (d).
We assign $S_1, S_2, S_3, S_4$ to the four large squares
as shown in  Fig.~\ref{fig:2cases},
where $S_i=(s_i,s_i)$.
And the polygon is dissected into 5  blocks $B_1,\dots,B_5$,
where $B_i=(w_i, h_i)$.

From our dissections in  Fig.~\ref{fig:2cases}, (by dotted lines),
we have 
\[
   \max\{w_i,h_i\} \ge \epsilon^{ 6^{i-1}}, \textrm{ for  } 1 \le i \le 4.
\]
And since the total area of the large items in the bin is at least 
  $h -\epsilon^{4\times 6^{i-1}-2}$, i.e.,
 the total area of the blocks is at most  $\epsilon^{4\times 6^{i-1}-2}$.
We have 
\[
  \min\{w_i,h_i\} \le \frac{\epsilon^{4\times 6^{i-1}-2}}{\epsilon^{ 6^{i-1}}}
  \le \epsilon^{ 2 \times 6^{i-1}}.
\]
The last inequality follows from $i \ge 2$.

Let $\delta = \epsilon^{ 6^{i-1}}$,
next we prove that $w_5 \le \delta^{ 2 }$ in the cases (d1) and (d2).
It is trivial to see $w_5 \le \delta^{ 2 }$ in the case (d1), 
since $w_5 \le w_1 =  \min\{w_1,h_1\}$.
Now, we consider the case (d2),
since $s_1 + s_2 \le 1$ and $s_1 -s_2 \le h_4 \le \delta^{2}$,
we have 
\[
  s_1 \le \frac{1+\delta^2}{2}.
\]
Since $s_3 + s_4 \le 1$ and $s_4 -s_3 \le h_2 \le \delta^{2}$,
we have 
\[
  s_4 \le \frac{1+\delta^2}{2}.
\]
So,
\[
  w_5 = s_1 + s_4 -1 \le 1+ \delta^2 -1 = \delta^2. 
\] 

Therefore, we have  in the cases (d1) and (d2),
 $w_5 \le \delta^2 \le \epsilon w_4$ and $h_5 \le h_4$,
i.e., to compare with block $B_4$,
block $B_5$ is sufficiently small and can be ignored.
So
\[
  OPT(S, B^{-}) \ge (1-O(\epsilon))OPT(S,B),
\]
where $S$ is the set of small squares, 
      $B^{-} = \cup_{i=1}^{4} B_i$,  $B = \cup_{i=1}^{5} B_i$.
%%  and
%%      $OPT(S,X)$ is the optimal value of packing $S$ into a block set $X$.
Hence when we pack small squares into blocks $ \cup_{i=1}^{5} B_i$,
we just consider $ \cup_{i=1}^{4} B_i$.
Then by  the similar proof for other cases,
we  have  $OPT_b \ge (1-O(\epsilon))OPT_p$.

%% Then using the similar proof in Theorem \ref{theorem:m<4},
%% by Lemmas  \ref{lemma:corner},\ref{lemma:manyresource},  \ref{lemma:m+1},
%% we have this theorem.

Hence, this lemma holds.
\end{proof}

% To pack small squares into the gaps of the bin, 
% by Lemma \ref{lemma:dissect},
% we just need to consider the problem of packing small squares
% into blocks which is formalized as follows.

 \subsection{ Algorithm $A_2$ and its analysis}
 Next, we give the details of  algorithm $A_2$
 which is similar to $A_1$.

 \para {\bf Description of  Algorithm $A_2$}
\begin{enumerate}
 \item Group items and guess an index $i$ such that 
        $OPT(L-L_i) \ge (1-\epsilon)OPT(L)$
       and  divide the remaining
       into two classes, say {\em large} and {\em small}, 
\item   Get all feasible packing of $L_{i-1} \cup \dots \cup L_1$,
         %  (all items  can be packed in the bin $(1,h)$),
            for each of them,
          \begin{enumerate} 
         \item  %% in each feasible packing, 
                 if there are at least  4   items or 
                 the total area of items is at most 
                 $h -\epsilon^{4\times 6^{i-1}-2}$,
                 then  pack large and small squares
                 by algorithm $A_1$.
         \item else  locate large items as Fig.~\ref{fig:our4} 
               and divide the gaps into blocks as 
               Fig.~\ref{fig:our4} and \ref{fig:2cases},
               lastly  apply the method in Lemma \ref{lemma:manyresource} 
               for {\em small} items.
       \end{enumerate}
 \item Output the one with the largest profit.
 \end{enumerate}

\begin{theorem}
 For any input list $L$,
 $\frac{OPT(L)}{A_2(L)} \le   \frac65 (1+O(\epsilon))$,
 where $\epsilon$ is sufficiently small.
 \label{theorem:m<=4}
\end{theorem}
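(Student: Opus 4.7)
My plan is to perform a case analysis on $m$, the number of \emph{large} items in an optimal solution $L_{opt}$, after first fixing a good index $i$ from step~1 of $A_2$. Since $A_2$ enumerates all $k+1 \ge 1/\epsilon$ size groups, a pigeonhole argument on $\sum_j p(L_j \cap L_{opt}) \le OPT(L)$ yields an index $i$ with $OPT(L - L_i) \ge (1-\epsilon)OPT(L)$, so it suffices to bound $A_2(L)$ against $OPT(L - L_i)$. With that $i$ fixed, ``large'' means side $> \epsilon^{6^{i-1}}$ and ``small'' means side $\le \epsilon^{6^{i}}$, so in particular $i \ge 2$ whenever $m \ge 1$.

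The easy branches invoke $A_1$ directly. If $m \ge 5$, the guess in step~2 matching $L_{opt}$ triggers step 2(a), and Lemma~\ref{lemma:m+1} gives
\[
A_2(L) \ge A_1(L) \ge \frac{m}{m+1}(1-\epsilon) OPT(L) \ge \frac{5}{6}(1-\epsilon) OPT(L),
\]
which is the binding inequality and yields the ratio $\frac{6}{5}(1+O(\epsilon))$. The case $m=0$ is immediate from Lemma~\ref{lemma:m+1}(i). For $1 \le m \le 4$ with the optimal's large items occupying total area at most $h - \epsilon^{4\cdot 6^{i-1}-2}$ (i.e.\ rest area at least $\epsilon^{-1}\delta$), the ``rest area is big'' branch of step 2(a) applies and Lemma~\ref{lemma:restarea} gives $A_1(L) \ge (1-2\epsilon)OPT(L)$, which is strictly inside the target ratio.

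The central case is $1 \le m \le 4$ with large items filling area more than $h - \epsilon^{4\cdot 6^{i-1}-2}$; here step 2(b) applies and I would chain the three main lemmas of Section~4. By Lemma~\ref{lemma:vertex}(ii) there are only $2^m(m+1)! = O(1)$ corner packings of $m \le 4$ squares, so the enumeration in step~2 reaches the corner packing $\pi^{*}$ that realizes $OPT(L,*)$, and Lemma~\ref{lemma:corner} gives $OPT(L,*) \ge (1-\epsilon)OPT(L)$. Writing $OPT(L,*) = P_\ell + OPT(S,P)$, where $P_\ell$ is the profit of the large squares placed by $\pi^{*}$, $S$ is the set of small items of $L$, and $P$ is the remaining rectilinear polygon, the dissection of Lemma~\ref{lemma:dissect} produces rectangular blocks $B$ with $OPT(S,B) \ge (1-\epsilon)OPT(S,P)$. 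Each block satisfies $\max/\min \ge \epsilon^{-6^{i-1}} \ge \epsilon^{-4}$ (using $\max \ge \epsilon^{6^{i-1}}$ and $\min \le \epsilon^{2\cdot 6^{i-1}}$ from the proof of Lemma~\ref{lemma:dissect}, together with $i \ge 2$), which is precisely the large-resource hypothesis of Lemma~\ref{lemma:manyresource}; applying that PTAS packs profit at least $(1-O(\epsilon))OPT(S,B)$ on the small squares. Telescoping,
\[
A_2(L) \ge P_\ell + (1-O(\epsilon)) OPT(S,P) \ge (1-O(\epsilon)) OPT(L,*) \ge (1-O(\epsilon))^{2} OPT(L),
\]
again well within $\frac{6}{5}(1+O(\epsilon))$.

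The hard part will be verifying this last chain cleanly: checking that the three successive approximations---corner rearrangement (Lemma~\ref{lemma:corner}), block expansion of the rectilinear polygon (Lemma~\ref{lemma:dissect}), and the large-resource PTAS (Lemma~\ref{lemma:manyresource})---compose multiplicatively with only an $O(\epsilon)$ total loss, and that the dissection of Figs.~\ref{fig:our4}--\ref{fig:2cases} in every subcase (especially the degenerate subcase (d), where one must argue the residual block $B_5$ is negligible) genuinely produces blocks satisfying the aspect-ratio hypothesis. The other branches reduce to a single application of an already-proved lemma, so once the tight subcase is handled the theorem follows by taking the worst of the branches, tight at the boundary $m=5$.
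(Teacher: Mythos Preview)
Your proposal is correct and follows essentially the same approach as the paper's own proof: the same case split on $m$ and the large-area threshold, invoking Lemmas~\ref{lemma:restarea} and~\ref{lemma:m+1} for the $A_1$ branches and chaining Lemmas~\ref{lemma:corner}, \ref{lemma:dissect}, and~\ref{lemma:manyresource} for the remaining $m\le 4$ large-area case. Your write-up is in fact more explicit than the paper's---you spell out the telescoping $A_2(L) \ge P_\ell + (1-O(\epsilon))OPT(S,P) \ge (1-O(\epsilon))OPT(L,*)$ and separately verify the aspect-ratio hypothesis of Lemma~\ref{lemma:manyresource}, whereas the paper compresses this into a two-line citation of Lemmas~\ref{lemma:corner} and~\ref{lemma:manyresource}.
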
 
\begin{proof}
 To consider an optimal packing solution $L_{opt}$,
 if there are at least 5 large items in $L_{opt}$ or
 the total area of large items in $L_{opt}$ 
 at most $h -\epsilon^{4\times 6^{i-1}-2}$,
 %%(this includes the case in which there is no large item)
 by Fact \ref{fact:append}, Lemmas 
\ref{lemma:restarea} and \ref{lemma:m+1}, 
 \[
   A_2(L) \ge \frac56(1-2\epsilon)OPT(L).
 \]
Else, the total area of the large items in the bin is at least 
  $h -\epsilon^{4\times 6^{i-1}-2}$
 and there are at most 4 packed.
%% i.e.,
%%  the total area for small squares in the bin is at most
%%   $\epsilon^{4\times 6^{i-1}-2}$.

 By the dissection of the polygons into rectangular blocks,
 shown as in Fig.~~\ref{fig:our4} and \ref{fig:2cases},
 in each block $(w,h)$, 
 we make sure that $\max\{w,h\} \ge \epsilon^{ 6^{i-1}}  $.
 So,  
%% \[
%% \min\{w,h\} \le \frac{\epsilon^{4\times 6^{i-1}-2}}{}  
%% \] 
\[
  \frac{\max\{w,h\}}{\min\{w,h\}} 
 \ge
 \frac{\max\{w,h\}}{\frac{\epsilon^{4\times 6^{i-1}-2}}{\epsilon^{ 6^{i-1}}}}
 \ge \frac{\epsilon^{2\times 6^{i-1}}}{\epsilon^{4\times 6^{i-1}-2}}
 \ge \epsilon^{-4}.
\]
The last inequality follows from $i\ge 2$.
(remember when $i=1$, there is no large item.)
By Lemmas  \ref{lemma:corner}, \ref{lemma:manyresource},
we  have 
\[
     A_2(L) \ge (1-O(\epsilon))OPT(L)
\]

By Lemma \ref{lemma:time}, the time complexity of 
Algorithm $A_2$  is a polynomial time of $n$.
Hence, this theorem holds. 
\end{proof}

 \section{Concluding remarks}\label{se:con}
 Note that  algorithm $A_2$ 
 can be expended to multi-dimensional cube packing.
 
%%  And when there are 5 large items packed in the bin,
%%  we also have some strategies for locating them such that
%%  the locating does not significantly affect further small items packing.
%%  Since our proof is based on some kind of enumeration,
%%   we need a new technique  for locating large items 
%%   when  there are more than 5 large items packed in the bin.
%%   If the technique is found, 
%%  then we can say this problem admits a PTAS. 
%% \small
 \baselineskip 12pt

\normalsize
\newpage
%%\begin{center}
\section{ Appendix}
%%\end{center}
\subsection{The proof for Lemma \ref{lemma:vertex}}
 \begin{proof}
  We use induction to prove part i).
  When $n=0$, there are 4 vertices in the bin.
  When $n=1$, there are at most 6 vertices in the rectilinear polygon.
  So, we assume that when $n=k$, part i) holds, i.e.,
  after packing $k$ items in the bin,
  there are at most $4+2k$ vertices in the rectilinear polygons.
  When $n= k+1$, we use one of $4+2k$ vertices and generate at most 3 vertices,
  hence the total number of vertices is at most 
  \[
    4 + 2k -1 + 3 = 4 + 2(k+1).
  \]
  Then we can see  there are at most 
  $4 + 2(i-1)$ ways to pack the $i$-th square, where $i \ge 1$.
  Hence to pack $n$ items in the bin,
  there are at most 
  \[
   \prod_{i=1}^{n} (4 + 2(i-1)) = 2^n \prod_{i=1}^n (i+1) = 2^n (n+1)!
  \] 
  possibilities. 
 \end{proof}

 \subsection{ NFDH packing }
 {\em NFDH} (Next Fit Decreasing Height) \cite{MM68}.
 {\em NFDH} packing behaves as follows:
 First sort all squares by their  heights, 
 then pack them in the bin from the largest one level by level as shown 
 in Fig. ~\ref{fig:NFDHDeco}(a).
 In each level, pack them by Next Fit, namely,
 if the current level cannot accommodate the next item,
 then open a new  with height equal 
 to the current item's height.
 We repeat this procedure,
 until there is no space for a new level in the bin.
 Here is a key property of {\em NFDH}.
 \begin{figure}[htbp]
  \begin{center}
  \includegraphics[scale=0.7]{./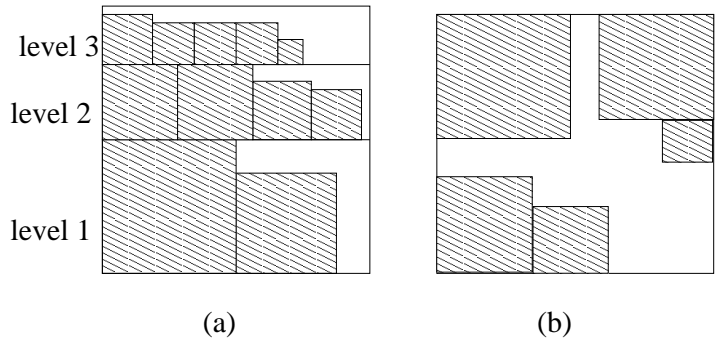}
  \caption{ NFDH and Corner packing}
  \label{fig:NFDHDeco}
  \end{center}
\end{figure}

%% \begin{lemma}\cite{BS06,MM68}
%% Given a set $S$ of squares with sides at most $\delta$ and 
%% a bin $(a,b)$,
%% if the total area in $S$ is at most $ab - (a+b)\delta$,
%% then $S$ can be packed in the bin by {\em NFDH},
%% i.e.,  if there are some ones left in $S$,
%% then the wasted area in the bin is less than $(a+b)\delta$.
%%  \label{lemma:waste}
%% \end{lemma}

%%  \subsection{Greedy algorithm for small squares} 
%%  Next, we give  a {\em Greedy Algorithm} for packing small squares
%%  into multi-bins:
%%  given a list $L=(A_1,\dots,A_k)$,
%%   where every square's side is
%%  at most $\epsilon$ and $\epsilon$ is a fixed positive small real,
%%  we are asked to pack a sublist of $L$
%%  into  rectangular bins $(a_1,b_1), \dots, (a_n,b_n)$ 
%%  to maximize the total profits.  

 \para {\bf Greedy Algorithm}
\begin{enumerate}
 \item Sort the input list $L$ such that 
  $\frac{p(A_1)}{w(A_1)} \ge \cdots \ge \frac{p(A_k)}{w(A_k)}$.
 \item For $i$ from 1 to n do \\
        if $(a_i \ge \epsilon)$ and $( b_i \ge \epsilon)$ then
       \begin{enumerate}
       \item   Find a maximal index $m$ such that $(A_1,A_2,...,A_m)$
             can be packed into the current bin by {\em NFDH}
             and pack $(A_1,A_2,...,A_m)$ into $(a_i, b_i)$.
       \item   Then update list $L$ and re-index $L$.
               If  $L$ becomes empty then  finish packing.      
       \end{enumerate} 
 \end{enumerate}


\begin{thebibliography}{99}

\bibitem{BS06}
  N. Bansal, J.R. Correa, C. Kenyon M. Sviridenko,
  Bin packing in multiple dimensions:
  Inapproximability results and approximation schemes,
  {\em Mathematics of Operations Research} 31(1), 31--49, 2006
  (SODA 2004).

%% \bibitem{bnr03}
%% A. Borodin, M. N. Nielsen, C. Rackoff: (Incremental)
%% Priority agorithms. Algorithmica 37(4): 295-326 (2003)
    
\bibitem{cm04}
A. Caprara and M. Monaci, On the two-dimensional knapsack problem,
{\em Operations Research Letters} 32:5-14, 2004.


%% \bibitem{cgi03}
%% M. Caramia, S. Giordan, and A. Iovanella, An on-line algorithm for
%% the rectangle packing problem with rejection, Proc. WEA 2003, LNCS
%% 2647, pp. 59-69, 2003.

%% \bibitem{ck04}
%% J.R. Correa and C. Kenyon, Approximation schemes for
%% multidimensional packing, {\em SODA}, 179-188, 2004.

\bibitem{CS05} C. Chekuri and S. Khanna,
 A polynomial time approximation scheme for the multiple knapsack problem,
 {\em SIAM J. Comput}. 35(3): 713-728, 2005.

\bibitem{Epstein2004}
L. Epstein and R. van Stee, Optimal online bounded space
multidimensional packing, 
{\em SODA} 207-216, 2004.

\bibitem{FL81}
W.Fernandez de la Vega and G.S. Lueker,
Bin packing can be solved within $1+\epsilon$ in linear time.
{\em Combinatorica}, 1(4):349-355, 1981

\bibitem{fmw99} C.E. Ferreira, E.K. Miyazawa, and Y. Wakabayashi,
Packing squares into squares, {\em Pesquisa Operacional} 
 19:223-237, 1999.

\bibitem{FGJ04}
A. V. Fishkin, O. Gerber and K. Jansen,
 On weighted rectangle packing with big resources,
{\em Proceedings of the 3rd IFIP International Conference on Theoretical Computer Science}, 237-250, 2004.
 
 \bibitem{FGJS05}
  A. V. Fishkin, O. Gerber and K. Jansen, R. Solis-Oba,
  Packing weighted rectangles into a square,
  {\em MFCS} pp.352-363, 2005. 

 



%% \bibitem{fw98}
%% A.Fiat and G.J.Woeginger,
%% Online Algorithms,
%% {\em LNCS 1442}, 1998.


%% \bibitem{gw93}
%% G.Galambos and G.J.Woeginger,
%% Repacking  helps in bounded space online bin packing,
%% {\em Computing}, 49:329-338, 1993.

%% \bibitem{gpt90}
%% G.Gambosi, A.Postiglione, and M. Talamo,
%% New algorithms for online bin packing,
%% In {\em Proc. 1st Italian Conference on Algorithms and Complexity},
%% 44-59, 1990. 
\bibitem{Han2006}
 Xin Han, Deshi Ye and Yong Zhou,
Improved Online  Hypercube Packing,
 {\em Proc. the Fourth Workshop on Approximation and Online Algorithms}, 
 LNCS 4368:226-239, 2006. 

   \bibitem{HIZ05}
  X. Han, K. Iwama and G. Zhang,
  Online removable square packing,
  {\em Proc. the Third Workshop on Approximation and Online Algorithms}, 2005.


\bibitem{Harren06}
R. Harren, 
Approximating the Orthogonal Knapsack Problem for Hypercubes,
 {\em ICALP (1) 2006}, 238-249.



%% \bibitem{it02}
%% K. Iwama and S. Taketomi, Removable online knapsack problems,
%% Proc. ICALP2002, LNCS 2380, pp. 293-305,2002.

\bibitem{jz04}
K. Jansen and G. Zhang, On rectangle packing: maximizing benefits,
{\em SODA}, 204-213, 2004.

\bibitem{JZSWAT04}
K. Jansen and G. Zhang,
 Maximizing the number of packed rectangles,
{\em SWAT},  362-371, 2004.
%% \bibitem{Janusz1997}
%% J. Januszewski and M. Lassak, On-line packing sequences of cubes
%% in the unit cube, {\em Geometriae Dedicata} {\bf 67} (1997),
%% 285-293.

%% \bibitem{Kann1991}
%%  V. Kann. Maximum bounded 3-dimensional matching is MAX SNP-complete.
%%  {\em Information Processing Letters,} 37:27-35, 1991.

\bibitem{Kell99}
H.~Kellerer.
 A Polynomial Time Approximation Scheme for the Multiple Knapsack Problem.
 {\em RANDOM-APPROX}  51-62, 1999.

\bibitem{KR00}
C. Kenyon, E. R\'emila,
 A Near-Optimal Solution to a Two-Dimensional Cutting Stock Problem,
 {\em Mathematics of Operations Research}, 25(4): 645-656, 2000.

\bibitem{kmrw01}
Y. Kohayakawa, F.K. Miyazawa, P. Raghavan, and Y. Wakabayashi,
Multidimensional cube packing,
 {\em Algorithmica}, 40(3) 173-187,2004.

\bibitem{Leung1990}
J.Y.-T. Leung, T.W. Tam, C.S. Wong, G.H. Young, and F.Y.L. Chin,
Packing squares into a square, {\em J. Parallel Distrib. Comput.},
10:271-275, 1990.

\bibitem{MM68}
 A.Meir  and L. Moser,
 On packing of squares and cubes,
 {\em Journal of combinatorial theory},
  5:126-134, 1968.

%% \bibitem{Petrank1994} 
%%  E. Petrank. The hardness of approximation: gap location.
%%  {\em Computational Complexity,} 4:133-157, 1994.
 
%% \bibitem{SSS04}
%% P. Sanders, N. Sivadasan, M. Skutella,
%% Online Scheduling with Bounded Migration,
%% {\em ICALP},1111-1122, 2004.

\bibitem{ss03}
S.S. Seiden and R. van Stee, New bounds for multidimensional
packing, {\em Algorithmica} 36:261-293, 2003.

\end{thebibliography}
\end{document}